\definecolor{TTH-color}{named}{green}
\definecolor{HH-color}{named}{magenta}
\definecolor{VG-color}{rgb}{0.97,0.57,0.11}
\definecolor{TTH-color2}{named}{red}
\definecolor{HH-color2}{named}{blue}
\definecolor{VG-color2}{rgb}{0.87,0.47,0.01}
\newtheorem{theorem}{Theorem}[section]
\newtheorem{corollary}{Corollary}[theorem]
\newtheorem{lemma}[theorem]{Lemma}
\begin{document}
\title{Many-Body Localization Landscape}
\author{Shankar Balasubramanian}
\affiliation{Department of Physics, Massachusetts Institute of Technology, Cambridge, MA 02139, USA}
\affiliation{Joint Quantum Institute, Department of Physics, University of Maryland, College Park, Maryland 20742-4111, USA}
\author{Yunxiang Liao}
\affiliation{Joint Quantum Institute and Condensed Matter Theory Center, Department of Physics, University of Maryland, College Park, Maryland 20742-4111, USA}
\author{Victor Galitski}
\affiliation{Joint Quantum Institute and Condensed Matter Theory Center, Department of Physics, University of Maryland, College Park, Maryland 20742-4111, USA}

\begin{abstract}
We generalize the notion of ``localization landscape,'' introduced by  M.  Filoche  and  S.  Mayboroda [Proc.  Natl.  Acad.  Sci. USA {\bf 109}, 14761 (2012)] for the single-particle Schr{\"o}dinger operator, to a wide class of interacting many-body Hamiltonians. The many-body localization landscape (MBLL) is defined on a graph in the Fock space, whose nodes represent the basis vectors in the Fock space and edges correspond to transitions between the nodes connected by the hopping term in the Hamiltonian. It is shown that in analogy to the single-particle case, the inverse MBLL plays the role of an effective potential in the Fock space. We construct a generalized discrete Agmon metric and prove Agmon inequalities on the Fock-state graph to obtain bounds on the exponential decay of the many-body wave-functions in the Fock space. The corresponding construction is motivated by the semiclassical WKB approximation, but the bounds are exact and fully quantum-mechanical. We then prove a series of locality theorems which establish where in the Fock space we expect eigenstates to localize.  Using these results as well as the locator expansion, we establish evidence for the existence of many-body localized states for a wide-class of lattice models \emph{in any physical dimension} in at least a part of their Hilbert space. The key to this argument is the observation that in sharp contrast to the conventional locator expansion for the Green's function, the locator expansion for the landscape function contains no resonances. For short-range hopping, which limits the connectivity of the Fock-state graph, the locator series is proven to be convergent and bounded by a simple geometric series. This, in combination with the discrete Agmon-like inequalities and the locality theorems, shows that localization for a fraction of the Hilbert space survives weak interactions and weak hopping at least for some realizations of disorder, but cannot prove or rule out localization of the entire Hilbert space. We qualitatively discuss potential breakdown of the locator expansion in the MBLL for long-range hopping and the appearance of a mobility edge in higher-dimensional theories. 

\end{abstract}
\pacs{}

\maketitle

\section{Introduction \label{sec:introduction}}
There have been two major recent developments in the study of Anderson localization -- a mature field pioneered by Anderson~\cite{Anderson} back in 1958. First, the breakthrough paper by Basko, Aleiner, and Altshuler~\cite{BAA1,BAA2,BAA3} and  follow-up theoretical and experimental research have demonstrated that single-particle localization can survive interactions in many-particle disordered systems~\cite{Huse2007,Huse2010,Imbrie,MBL1D-0,MBL1D-1,MBL1D-2,MBL1D-3,MBL1D-4,MBL1D-5,MBLexp1D,MBLexp2D,DeMarco,Monroe,Review-Huse,Review-Serbyn}. This localization in the Fock space~\cite{Gornyi,BAA1} was called many-body localization (MBL), which remains a fast-developing field, with a number of open questions and controversies still unresolved. A second development came as a surprise from the mathematical analysis community, in which Svitlana Mayboroda and collaborators have developed a new, intuitive approach to single-particle Anderson localization~\cite{pnas,ll-dual,potential-1,potential-2}. Namely, it was proven that  a simple structure, $ u({\bf x})$ -- defined via $\left[-\nabla^2 + V({\bf x}) \right] u({\bf x}) = 1$ (here $V$ is the disorder potential) and called the localization landscape (LL)~\cite{pnas,ll-dual,ll-hop,ll-semiI,ll-semiII,ll-semiIII} -- represents an effective potential, where localized states up to a certain energy window  {\em actually} reside. More specifically, it was shown that in contrast to the bare disorder potential, whose explicit form does not necessarily shed much light on the structure of localization, the LL carries this information in a straightforward way.  Valleys of the inverse localization landscape (which includes both the effect of disorder and hopping/kinetic energy in a non-peturbative way) are ``traps'' that confine the quantum particle~\cite{pnas,ll-dual}. A series of bounds and estimates have been proven to show exponential localization of the wave-function in the LL ``traps.''~\cite{potential-1,potential-2,ll-exp} 

As far as the first major development goes -- regarding MBL -- there are currently four pieces of evidence for its existence. First, numerous exact diagonalization studies of finite-size systems -- both spin chains with random disorder and one-dimensional interacting fermions in a quasiperiodic potential (e.g., the interacting Aubry-Andr{\'e} model~\cite{AA1,AAH1,AAH2}) contain  signatures of a  transition~\cite{MBL1D-1,MBL1D-2,MBL1D-3}, or possibly crossover, from the localized, non-ergodic phase to a metallic, ergodic one. The numerical metrics used to characterize the many-body phases include calculating many-body inverse-participation ratio~\cite{IPR} and level statistics~\cite{Huse2010} or its proxies. However, these careful exact diagonalization studies can suffer from strong finite-size effects, which limit the usefulness of a numerical method in providing unambiguous evidence. Furthermore, recent and disconcerting data analysis by  Toma{\v{z}} Prosen and collaborators~\cite{Prosen} provided tentative evidence, based on data collapse in a model widely believed to be many-body localized, that favors ``many-body quantum chaos'' (i.e., an ergodic phase with Wigner-Dyson level statistics), although this interpretation remains controversial. While the exact diagonalization studies can neither serve as a conclusive proof of MBL as a stable phase~\cite{stability-1}, nor rule it out, they do seem to provide a faithful description of physics actually seen in cold-atom experiments (at least in one dimension). The experimental studies, in particular by Immanuel Bloch et al. (in both one and two dimensions)~\cite{MBLexp1D,MBLexp2D}, Brian de Marco et al.~\cite{DeMarco}, and Chris Monroe et al.~\cite{Monroe}, represent a second piece of evidence. However, given the subtle mathematical nature of the question regarding the existence of MBL and a number of additional complications in experiments (the finite lifetime of many-body states, presence of the trap potential, etc), experiments can only provide strong indication that MBL exists, but not a proof. 

The third piece of evidence is an actual proof by John Imbrie in one dimension~\cite{Imbrie}. This work is one of a few papers at a mathematical level of rigor addressing the MBL problem. However, it includes a physically reasonable but mathematically {\em ad hoc} assumption about the absence of level attraction. As was pointed out by Polkovnikov \cite{Polkovnikov}, coupling to a bath may in principle lead to level attraction. Indeed, there are examples in the literature where phenomena of this type have been reported \cite{levattraction1,levattraction2}. However, it is unclear at this stage whether interactions may lead to an effective bath that would enable level attraction. It appears unlikely, but until this issue is mathematically settled, Imbrie's proof is incomplete. Finally, the fourth piece evidence for MBL comes from the original MBL work by Basko, Aleiner, and Altshuler~\cite{BAA1,BAA2,BAA3}. The work is based on the locator expansion, which is also a method Anderson used in his pioneering paper~\cite{Anderson}. The key idea of the locator expansion is simple. One starts with a strongly disordered model, with hopping ``turned'' off (which is trivially localized), and performs a series expansion in the hopping to obtain the Green's functions. The method is straightforward, but suffers from divergences associated with resonances in the expansion, which does not necessarily converge. A renormalization scheme exists to resum the series, but it is not mathematically rigorous  (unless the hopping takes place on a Caylee graph and specifically a Bethe lattice, in which rigorous statements have been known since the early work by Abou-Chacra,  Anderson, and Thouless~\cite{Anderson2}).  All in all, there exists overwhelming evidence for the existence of MBL, but a conclusive mathematical proof for physically relevant lattices is still lacking. 

In contrast, the localization landscape of Mayboroda and collaborators~\cite{pnas,ll-dual,potential-1,potential-2} is a set of mathematically rigorous statements and bounds which physically correspond to single-particle Anderson localization. The key to the LL construction is a set of  universal Agmon-type bounds proven in Ref.~\cite{agmon-1,agmon-2}. Agmon's estimates have been known since 1979 and represent a set of {\em exact} inequalities for the eigenstates of the single-particle Schr{\"o}dinger operator
(which can be generalized to other operators). The construction of the estimates is motivated by a semiclassical WKB analysis, but Agmon's estimates hold beyond the usual WKB regime. Particular Agmon's bounds are potential-specific and have been developed for a wide variety of problems such as the double-well potential~\cite{well1,well2,well3}, a particle in a magnetic field~\cite{mag}, the Klein-Gordon equation~\cite{KG1,KG2}, etc. A remarkable achievement of Mayboroda et al~\cite{potential-1,potential-2,ll-exp} is that they have proven a set of universal Agmon's estimates for the eigenstates of the  Schr{\"o}dinger operator in a random potential (while potential does not have to be random for the construction to hold, the bounds appear to be most useful when it is), but with the role of an effective potential played by the inverse landscape function, $1/u({\bf x})$ rather than the bare potential itself. Most importantly these bounds predict exponential decay of eigenfunctions localized in the valleys of the LL. This statement is both mathematically rigorous and supported by numerical simulations, which provide an intuitive illustration of the structure of the spectrum. Note that the method works in an arbitrary dimension, and may contain a route to describing the delocalizating transition in three and higher dimensions via a percolation in the LL~\cite{MayborodaIASTalk}, although it has not been rigorously described yet. 

This paper brings together many-body localization and localization landscape by constructing a many-body localization landscape (MBLL) on a graph in Fock space. We prove that the MBLL  serves as an effective potential where many-body states are localized. First we define the MBLL for a wide class of interacting spin Hamiltonians on an arbitrary-dimensional lattice or more generally on a graph with nearest neighbor interactions (or equivalently hoppings), $t$. 
Next, we generalize Mayboroda's arguments to an undirected discrete graph in the Fock space. The vertices of the graph correspond to many-body states and the edges to transitions enabled by the hopping term in the Hamiltonian.  We show that the inverse many-body localization landscape, $1/u_\alpha$ can be viewed as an effective potential in the Fock space, by deriving the following equation 
\begin{equation}
    - t \sum_{\beta \in N(\alpha)} u_\beta u_\alpha \left(\frac{\psi_\beta}{u_\beta}-\frac{\psi_\alpha}{u_\alpha}\right) + \psi_\alpha = E' u_\alpha \psi_\alpha,
\end{equation}
which represents a discrete Fock-space generalization of Eq.~(5) in Ref. \cite{potential-1} by Mayboroda et al. involving a ``weighted'' Laplacian ($\alpha$ and $\beta$ represent nodes of the Fock-space graph corresponding to the many-body states $\ket{\alpha}$ and $\ket{\beta}$, $N(\alpha)$ is the set of neighbors of node $\alpha$, and $E'$ is a shifted many-body energy as explained in the text). The next step involves a  generalization of Agmon's bounds for this equation. It is accomplished by following semi-classical WKB-like intuition and a subsequent path-integral-like construction on the Fock-space graph. In analogy with the single-particle Agmon's bound, a many-body Agmon's bound can be proven, which is exact, and rigorously establishes (given certain properties of the landscape, as discussed in the text) exponential decay of the eigenfunctions in the valleys of the MBLL. 

It is important to note that since exponential decay occurs in the valleys of the MBLL (or the mountains of the inverse MBLL), the Agmon bounds do not rule out the possibility of the wavefunction spreading out across all of the MBLL mountains.  To better quantify this, we prove a series of locality theorems; these theorems roughly states that eigenstates in the global Fock-space graph are close to eigenstates defined in a particular well of the inverse MBLL if the corresponding energies are close to each other.  Therefore, eigenstates at a particular will primarily lie in the set of wells whose corresponding spectra resonate at that energy.  These can severely restrict the possible support of the eigenstate, assuming that we choose a particular realization of disorder.

While these exponential bounds are exact, they assume the existence of valleys in the MBLL, which is a priori not guaranteed. We then perturbatively explore the topology of the many-body localization landscape using an Anderson-like locator expansion. Importantly, to establish the existence of the valleys, we do not need information about the entire spectrum (which is provided by the Green's function in the usual locator expansion) and the knowledge of the MBLL suffices. Note that  finding the latter is much simpler, as it involves inverting a matrix (as opposed to fully diagonalizing it, which in effect is required for the full Green's function and is more computationally costly). Another critical observation is that the locator expansion for MBLL {\em does not have the problem of resonances}, which has been the major roadblock in studies of the localization physics using conventional methods. This simplification allows us to circumvent the need for renormalization in the locator expansion, bound it by a geometric series, and hence prove convergence for a wide class of models. If we start with a strongly disordered Hamiltonian with no hopping, localization at the bare level is trivial, and the convergence of the series in conjunction with the locality theorem, essentially proves that {\em some} states remain many-body localized for particular realizations of disorder. However, this analysis cannot provide useful bounds for all states in the Hilbert space. It is also unclear at this stage what ``fraction'' of the Hilbert space can be proven to remain localized per this method 
(this seems to remain an open problem for the single particle localization landscape as well). We discuss this critical point towards the end of the paper  along with the structure of the Fock space graph corresponding to physical lattices of different dimension. Simple estimates for graph connectivity (which enters the many-body Agmon bounds) point to a route that can help explore mobility-edge physics in higher-dimensional MBL models. 

\section{Model and The Fock-Space Graph}
The canonical model that is used to describe localization is the Anderson model, which consists of non-interacting particles in a random potential.  We are interested in understanding the effects of adding an interaction.  The simplest model which incorporates nontrivial interaction is given by the Hamiltonian:
\begin{equation}\label{eq:Ham}
    \mathcal{H} = -t\sum_{\langle i, j\rangle}(\sigma^+_i \sigma^-_{j} + \text{h.c.}) + \sum_i \epsilon_i n_i + V \sum_{\langle i, j\rangle} n_i n_j,
\end{equation}
where $\sigma^+$ and $\sigma^-$ are spin-$1/2$ raising and lowering operators, and $n = 2\sigma^z-1$, which we refer to as the ``density''. 

The first term denotes hopping between nearest neighbors on graph $\mathcal{G}_P$ with edge set $E$ and vertex set $V$.  For instance, this graph can be a one-dimensional lattice or a square lattice, but for the sake of generality we assume it can be arbitrary. We assume that $t\ge 0$; however, if the graph is bipartite, the sign of $t$ is unimportant and can always be made negative by the mapping $\sigma^+ \to -\sigma^+$ and $\sigma^- \to -\sigma^-$ on one of the subgraphs.  This latter case where $t\leq 0$ is more conventionally studied in the context of MBL.  Call $|V| = L$ the number of vertices in $\mathcal{G}_P$ (or correspondingly, the number of sites in the lattice).  

The second term denotes the disorder potential. In the case of the Anderson model, $\epsilon_j \sim U[0, W]$, where $W$ is a constant, but the specific distribution of disorder is unimportant for our construction.  The last term denotes density-density interactions that occur between nearest neighbors in $\mathcal{G}_P$; we may generalize the density-density interactions to quartic interactions, so long as the interaction strength is negative (although there exist graphs such that a similar particle-hole symmetry as described in the previous paragraph renders the sign unimportant).    

An interpretation that will be useful in the subsequent section involves the visualization of this Hamiltonian in Fock space.  Since the Hamiltonian in Equation \ref{eq:Ham} is particle conserving, we restrict the Hilbert space to the set of states with $N$ occupied sites.  A tuple of occupation numbers at each site uniquely describes a basis vector in the Hilbert space.  Associate each of these $M = \binom{L}{N}$ basis vectors to a vertex in a different graph $\mathcal{G}_F$.  Edges connecting two such vertices in $\mathcal{G}_F$ occur when the two configurations are connected by the hopping term.  The resulting ``hypergraph'' is essentially an embedding of the physical graph (or lattice) $\mathcal{G}_P$ in Fock space, and indicates the structure of the hopping.  This is a visual that allows us to better understand MBL in Fock space, as well as help develop intuition for bounds we present throughout the rest of the paper.  Figure \ref{fig:fock_exmp} is an example of the Fock-space graph $\mathcal{G}_F$ for a small system with $N=3$ particles; the physical graph $\mathcal{G}_P$ is a one-dimensional chain with $L=5$ sites and periodic boundary conditions.

Henceforth, we will use the Latin alphabet ($i$, $j$) to denote nodes in the real-space (``physical'') graph $\mathcal{G}_P$ and the Greek alphabet ($\alpha$, $\beta$) to denote nodes in the Fock-space graph $\mathcal{G}_F$.

\begin{figure}
    \centering
    \includegraphics[scale=0.42]{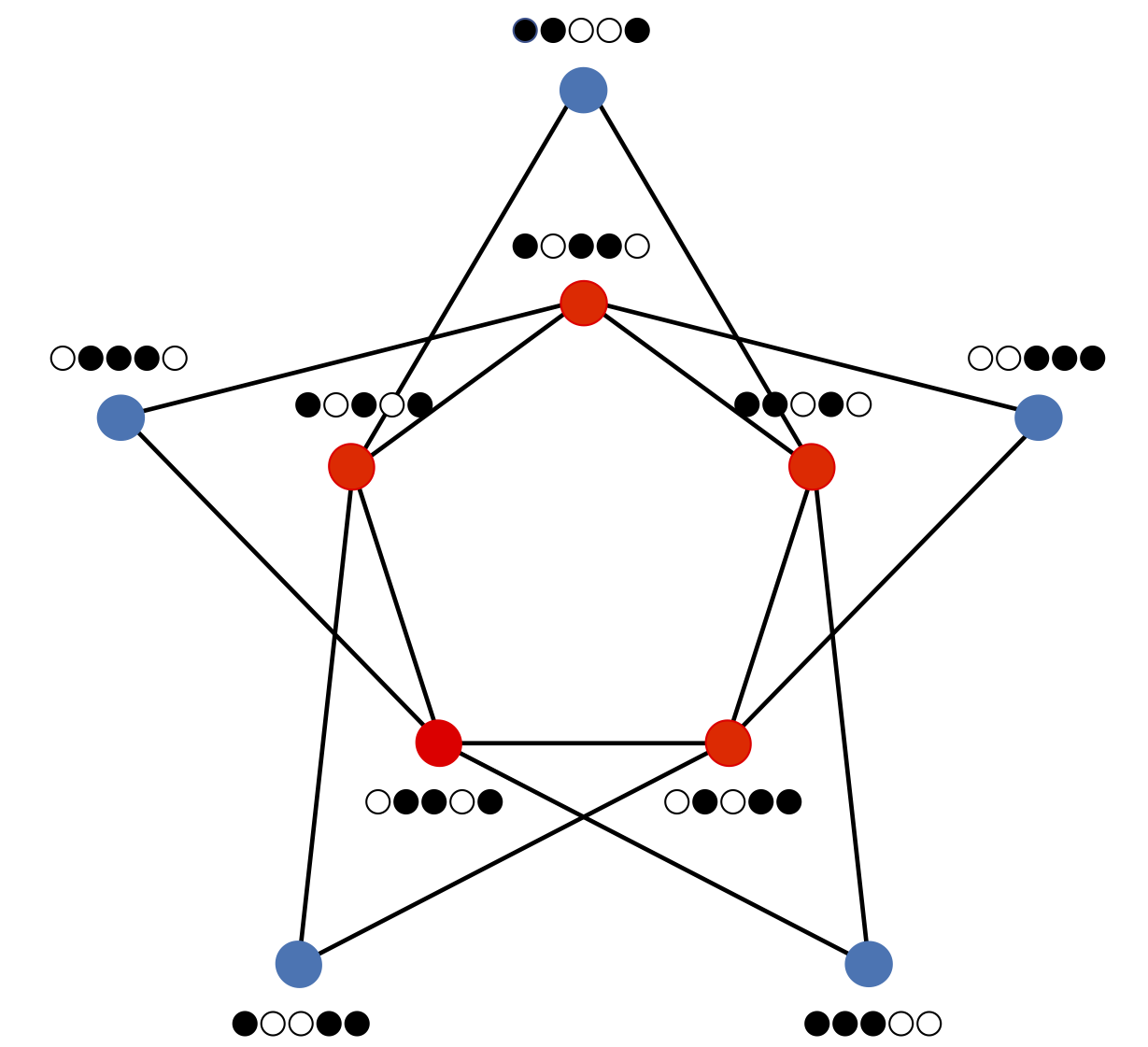}
    \caption{The Fock-space graph $\mathcal{G}_F$ for a one-dimensional chain with $L=5$ and $N=3$.  For convenience, the occupation of each state is denoted next to each node; blackened circles correspond to occupied sites and unfilled circles correspond to empty sites.  Notice that we assume periodic boundary conditions on the physical lattice in this construction, although this assumption is not needed in general.}
    \label{fig:fock_exmp}
\end{figure}

\section{Defining the MBLL}

In this section, we briefly visit the premiere results of Mayboroda et al.~\cite{pnas,ll-exp,potential-1,potential-2} regarding single particle localization via uniformly upper-bounding the eigenstates by the so-called landscape function.  We then show that the landscape formalism naturally extends to the many-body case with interactions.  This allows us to conclude similar bounds to those derived by Mayboroda et al.
\subsection{One-Particle Landscape}
Mayboroda et al.~\cite{pnas} considered the Hamiltonian $\mathcal{H} = -\nabla^2 + V(\bm{x})$ to solve the eigenvalue problem $\mathcal{H} \psi(\bm{x}) = E \psi(\bm{x})$, $\bm{x} \in \Omega \subset \mathbb{R}^d$ and $\psi(\bm{x})\Bigr|_{\partial \Omega}=0$.  Rather than solving the above Schr{\"o}dinger equation, the crucial insight is to solve $\mathcal{H} u(\bm{x}) = 1$, where $u(\bm{x})$ is the landscape function.  The resulting eigenstate satisfies $|\psi(\bm{x})| \leq E u(\bm{x}) \sup_{\bm{x}} |\psi(\bm{x})|$.  Thus, the landscape function can indicate whether states are localized or extended.  Moreover, it was also argued that $1/u(\bm{x})$ acts as an effective confining potential~\cite{potential-1,potential-2}, so at a given energy the eigenfunction can exponentially decay in classically forbidden regions.  Our goal is to extend this intuition to many-body systems.

Along with continuum models, discrete single particle systems were also analyzed.  Here, the wavefunction is discretized into a vector of scalar entries, the Laplacian term of the Hamiltonian is discretized so that $(\nabla^2) \psi_i = \psi_{i+1} + \psi_{i-1} - 2 \psi_i$, and the potential is treated as an onsite term. It was shown that solving the matrix equation $\mathcal{H} u = 1$ gives a landscape function that satisfies the same bounds as in the continuous case~\cite{ll-dual}.

\subsection{Many-body Landscape}

We now construct a similar landscape function for the Hamiltonian in Equation \ref{eq:Ham}.  To do so, we notice that in the Fock space representation, the Hamiltonian consists of a sum of two parts.  The first part is the diagonal contribution due to the onsite and interaction terms.  The second part is $-t$ times the adjacency matrix of graph $\mathcal{G}_F$.  Thus, this Hamiltonian behaves similarly to the single particle case albeit with long range hopping.  It thus suffices to show that a landscape function can be defined given the Hamiltonian in the Fock space.

As before, restrict the Hilbert space to states with a fixed number of occupied sites; label the corresponding basis states by $|\alpha\rangle$ (occupation number representation).  For an arbitrary superposition of such states, call $v_\alpha$ the coefficient of state $|\alpha \rangle$.  Then, the Hamiltonian acting on state $|v\rangle = \sum_{\alpha} v_\alpha |\alpha\rangle$ gives
\begin{equation}
    \langle \alpha| \mathcal{H} |v\rangle = -t \sum_{\beta \in N(\alpha)} v_{\beta} + \left(\sum_{i = 1}^{L} V n_i^{(\alpha)} n_{i+1}^{(\alpha)} + \epsilon_i n_i^{(\alpha)} \right)v_\alpha,
\end{equation}
where $n_i^{(\alpha)}$ denotes the occupation number of site $i$ for state $\alpha$, and $N(\alpha)$ is the set of neighbors of $\alpha$ in graph $\mathcal{G}_F$.  For the purposes of the following proofs, we also need to impose Dirichlet zero boundary conditions in the sense of the graph $\mathcal{G}_F$.  This can be accomplished by augmenting $\mathcal{G}_F$ to $\mathcal{G}_F'$ and connecting an additional node $\alpha'$ to each node $\alpha$.  We enforce the condition $v_{\alpha'} = 0$, which results in the above equation not changing.  At the end, we perform a restriction of $\mathcal{G}_F'$ to $\mathcal{G}_F$ without changing any of the main results.

The following Lemma gives conditions for when we should expect $v_{\alpha}$ to be positive.  The proof method is analogous to that of the maximum principle for harmonic functions in a bounded domain.
\begin{lemma}
Suppose $\langle \alpha| \mathcal{H} + K |v\rangle \geq 0$ where $K$ is a positive constant for all $\alpha$ and $v_{\alpha'} = 0$ for all $\alpha'$.  Then, if 
\begin{equation}
K - t |N(\alpha)| + \sum_{\langle i, j\rangle} V n_i^{(\alpha)} n_j^{(\alpha)} + \sum_{i} \epsilon_i n_i^{(\alpha)} \geq 0
\end{equation}
for all $\alpha$, we have that $v_{\alpha} \geq 0$.  Furthermore, if there exists a $\beta$ for which $\langle \beta| \mathcal{H} + K |v\rangle > 0$, then $v_\alpha > 0$ for all $\alpha$.
\end{lemma}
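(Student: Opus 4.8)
The plan is to prove this as a discrete minimum principle on the Fock-space graph, mirroring the classical argument that a superharmonic function with nonnegative boundary data is itself nonnegative. First I would recast the operator in ``divergence form.'' Writing $D_\alpha \equiv \sum_{\langle i,j\rangle} V n_i^{(\alpha)} n_j^{(\alpha)} + \sum_i \epsilon_i n_i^{(\alpha)}$ for the diagonal part, an elementary manipulation gives
\begin{equation}
\langle \alpha | \mathcal{H} + K | v\rangle = -t\sum_{\beta \in N(\alpha)}\left(v_\beta - v_\alpha\right) + c_\alpha\, v_\alpha,
\end{equation}
where $c_\alpha \equiv K + D_\alpha - t|N(\alpha)|$ and $|N(\alpha)|$ is the number of terms in the sum. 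The hypothesis of the lemma is exactly $c_\alpha \geq 0$ for every $\alpha$, and combined with $t\geq 0$ this renders $\mathcal{H}+K$ a monotone (M-matrix-like) operator whose ``discrete superharmonic'' functions obey a minimum principle.

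For the first claim I would argue by contradiction. Let $m = \min_\alpha v_\alpha$ over all nodes of the augmented graph $\mathcal{G}_F'$, and suppose $m < 0$. Since every boundary node carries $v_{\alpha'} = 0 > m$, the minimum is attained at some interior node $\alpha^*$. Evaluating the divergence-form identity at $\alpha^*$ and invoking $\langle \alpha^*|\mathcal{H}+K|v\rangle \geq 0$ gives
\begin{equation}
0 \leq -t\sum_{\beta\in N(\alpha^*)}\left(v_\beta - m\right) + c_{\alpha^*}\, m.
\end{equation}
Both terms on the right are manifestly nonpositive: each difference $v_\beta - m \geq 0$ because $m$ is the minimum, and $c_{\alpha^*} m \leq 0$ because $c_{\alpha^*}\geq 0$ while $m<0$. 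Hence each term must vanish separately. With $t>0$ the vanishing of the first term forces $v_\beta = m$ for every neighbor $\beta$ of $\alpha^*$; but $\alpha^*$ is connected in $\mathcal{G}_F'$ to its boundary node $\alpha^{*\prime}$, where $v_{\alpha^{*\prime}} = 0 \neq m$, a contradiction. Therefore $m\geq 0$, i.e. $v_\alpha \geq 0$ for all $\alpha$. The degenerate case $t=0$ is immediate, since then $\mathcal{H}+K$ is diagonal with nonnegative entries $c_\alpha$.

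The strict statement is a discrete \emph{strong} minimum principle, and this is where the real care is needed. Assuming $v\geq 0$, suppose the interior zero set $Z=\{\alpha : v_\alpha = 0\}$ is nonempty and pick $\alpha_0\in Z$. Since $0$ is now the global minimum, the same evaluation at $\alpha_0$ yields $0 \leq -t\sum_{\beta\in N(\alpha_0)} v_\beta \leq 0$, so every neighbor of $\alpha_0$ also lies in $Z$; iterating along edges shows $Z$ is closed under taking neighbors. Invoking connectivity of $\mathcal{G}_F$, a nonempty $Z$ forces $v\equiv 0$ on the whole graph, which makes $\langle\beta|\mathcal{H}+K|v\rangle = 0$ at every node and contradicts the assumed existence of a $\beta$ with $\langle\beta|\mathcal{H}+K|v\rangle > 0$. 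Hence $Z=\emptyset$ and $v_\alpha>0$ everywhere. The one genuine subtlety I anticipate is bookkeeping the Dirichlet augmentation correctly—verifying that including the boundary neighbor $\alpha'$ alters neither the equation (since $v_{\alpha'}=0$) nor the degree count $|N(\alpha)|$ entering $c_\alpha$—and, for the strong principle, justifying that the relevant particle-number sector of $\mathcal{G}_F$ is connected so that the zero set can in fact propagate to the node where strict positivity is assumed.
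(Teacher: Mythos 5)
Your proof is correct and takes essentially the same route as the paper's: a discrete minimum principle evaluated at an interior minimum node to get $v_\alpha \geq 0$, followed by propagation of the zero set through the connected graph $\mathcal{G}_F'$ to get strict positivity by contraposition. If anything, your weak-minimum step is slightly more careful than the paper's, since by forcing both nonpositive terms to vanish and invoking the Dirichlet boundary neighbor you never need the coefficient $c_\alpha = K - t|N(\alpha)| + \sum_{\langle i,j\rangle} V n_i^{(\alpha)} n_j^{(\alpha)} + \sum_i \epsilon_i n_i^{(\alpha)}$ to be strictly positive, whereas the paper's argument asserts positivity of this quantity even though the hypothesis only guarantees $c_\alpha \geq 0$.
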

\begin{proof}
Suppose that, for the sake of contradiction, the global minimum $\text{argmin}_{\alpha, \alpha'} v_\alpha$ does not occur within the set of $\alpha'$ nodes.  With an abuse of notation, call $v_\alpha$ the global minimum.  Then, $v_\beta \geq v_\alpha$, where $\beta \in N(\alpha)$.  Therefore,
\begin{equation}
    \sum_{\beta \in N(\alpha)} \left(v_\beta - v_\alpha\right) \geq 0. \label{eq:localmin}
\end{equation}
This condition enforces that $v_\alpha$ is a local minimum with respect to its neighboring nodes in $\mathcal{G}_F'$.  Next, since $\langle \alpha| \mathcal{H} + K |v\rangle \geq 0$, it follows that
\begin{equation}
    t \sum_{\beta \in N(\alpha)} v_\beta \leq K v_\alpha + \left(\sum_{\langle i, j\rangle} V n_i^{(\alpha)} n_j^{(\alpha)} + \sum_{i} \epsilon_i n_i^{(\alpha)} \right)v_\alpha.\label{eq:ineqlemm1}
\end{equation}
Thus, combining with Equation \ref{eq:localmin}, we find
\begin{align}
    0 &\leq t \sum_{\beta \in N(\alpha)} \left(v_\beta - v_\alpha\right),\\ &\leq K v_\alpha - t |N(\alpha)| v_\alpha + \left(\sum_{\langle i, j\rangle} V n_i^{(\alpha)} n_j^{(\alpha)} + \sum_{i} \epsilon_i n_i^{(\alpha)} \right)v_\alpha.
\end{align}
To remind, note that $|N(\alpha)|$ is the number of neighbors of node $\alpha$ in $\mathcal{G}_F'$, which is one greater than the number of neighbors in $\mathcal{G}_F$.  Simple rearrangement gives
\begin{equation}
    0 \leq \left(K - t |N(\alpha)| + \sum_{\langle i, j\rangle} V n_i^{(\alpha)} n_j^{(\alpha)} + \sum_{i} \epsilon_i n_i^{(\alpha)}\right)v_\alpha.
\end{equation}
The quantity in the parenthesis is always positive by assumption, so $v_\alpha \geq 0$.  However, this contradicts the assumption that $v_\alpha$ is a global minimum, since it is at least equal to the value of $v_{\alpha'}$ for all $\alpha'$.  Thus, $v_\alpha$ only acquires a minimum in the nodes $\alpha'$, and it follows that $v_\alpha \geq 0$ for all nodes in $\mathcal{G}_F'$.

Proving strict positivity follows directly.  Suppose $v_\alpha = 0$ for some $\alpha$.  Then, reading from Equation \ref{eq:ineqlemm1}, we find
\begin{equation}
    t \sum_{\beta \in N(\alpha)} v_\beta \leq K v_\alpha + \left(\sum_{\langle i, j\rangle} V n_i^{(\alpha)} n_j^{(\alpha)} + \sum_{i} \epsilon_i n_i^{(\alpha)}\right) v_\alpha = 0,
\end{equation}
Since we proved that $v_\alpha \geq 0$ for all $\alpha$, we have that $v_\beta = 0$ for $\beta \in N(\alpha)$. We can represent this as the diffusion of an action on $\mathcal{G}_F'$: we set $v_\alpha$ equal to zero in node $\alpha$ and spread this action to all neighboring nodes.  We then inductively repeat.  Since $\mathcal{G}_F'$ is irreducible, there exists an iteration in which all nodes will equal zero -- thus all $v_\alpha = 0$.  Therefore, $\langle \alpha| \mathcal{H} + K |v\rangle  = 0$.  Hence, taking the contrapositive of this logic combined with $v_\alpha \geq 0$ finishes the proof.
\end{proof}
Notice that the constant $K$ equals
\begin{equation}
    K = - \min_{\alpha} \left[\sum_{\langle i, j\rangle} V n_i^{(\alpha)} n_j^{(\alpha)} + \sum_{i} \epsilon_i n_i^{(\alpha)} - t |N(\alpha)|\right].\label{eq:Kdef}
\end{equation}
This corresponds to a constant shift of the Hamiltonian, which does not affect the eigenstates.  Note that Mayboroda et al.~\cite{ll-dual} had to perform a similar shift to impose positive onsite potentials.  Next, we show that this conclusion implies that the elements of $(\mathcal{H}+K)^{-1}$ are positive:
\begin{lemma}
Consider the restriction of matrix $\mathcal{H}+K I \triangleq \mathcal{H}+K$ to states in $\mathcal{G}_F$, where $I$ denotes the identity matrix.  The inverse of the restriction of $\mathcal{H} + K$ is entry-wise positive.
\end{lemma}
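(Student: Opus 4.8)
The plan is to read off the inverse one column at a time and apply the preceding maximum-principle Lemma to each column separately. Writing $e_\gamma$ for the indicator vector of node $\gamma$ in $\mathcal{G}_F$, the $\gamma$-th column of $(\mathcal{H}+K)^{-1}$ is the solution $v$ of $(\mathcal{H}+K)v = e_\gamma$, so entry-wise positivity of the inverse is precisely the statement that every such $v$ is strictly positive. The whole lemma thus reduces to checking the hypotheses of the previous Lemma for these particular right-hand sides. Throughout I would extend any vector defined on $\mathcal{G}_F$ to the augmented graph $\mathcal{G}_F'$ by setting $v_{\alpha'} = 0$; by the construction of $\mathcal{G}_F'$ this makes the functional $\langle \alpha|\mathcal{H}+K|v\rangle$ coincide with the $\alpha$-th entry of the restricted matrix applied to $v$, so the two formulations agree. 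I would also note that the positivity condition required by the previous Lemma, namely $K - t|N(\alpha)| + \sum_{\langle i,j\rangle} V n_i^{(\alpha)} n_j^{(\alpha)} + \sum_i \epsilon_i n_i^{(\alpha)} \geq 0$ for all $\alpha$, holds exactly because $K$ was chosen as in Eq.~\eqref{eq:Kdef}; this is the ingredient that licenses invoking the Lemma at all.

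Before speaking of ``the inverse'' I must first establish that the restricted matrix is invertible, and I would get this directly from the previous Lemma rather than from any external theorem. Suppose $(\mathcal{H}+K)v = 0$. Then $\langle \alpha|\mathcal{H}+K|v\rangle = 0 \geq 0$ for every $\alpha$, so the Lemma gives $v_\alpha \geq 0$. Applying the same Lemma to $-v$, which also satisfies $(\mathcal{H}+K)(-v) = 0 \geq 0$, gives $v_\alpha \leq 0$. Hence $v = 0$, the kernel is trivial, and the restricted matrix is invertible on $\mathcal{G}_F$. (Equivalently, one can observe that the choice of $K$ together with the extra edge to $\alpha'$ makes the diagonal entry $K + \sum_{\langle i,j\rangle} V n_i^{(\alpha)} n_j^{(\alpha)} + \sum_i \epsilon_i n_i^{(\alpha)}$ at least $t(|N(\alpha)|_{\mathcal{G}_F}+1)$, strictly exceeding the off-diagonal row sum $t|N(\alpha)|_{\mathcal{G}_F}$ whenever $t>0$, so the matrix is strictly diagonally dominant; but the kernel argument keeps everything self-contained.)

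With invertibility in hand I would fix a node $\gamma$ and set $v = (\mathcal{H}+K)^{-1} e_\gamma$, so that $\langle \alpha|\mathcal{H}+K|v\rangle = \delta_{\alpha\gamma}$. This right-hand side is nonnegative for all $\alpha$ and is \emph{strictly} positive at $\alpha = \gamma$. The previous Lemma therefore applies in its strict form: because there is a node (namely $\gamma$) at which the functional is strictly positive, we conclude $v_\alpha > 0$ for all $\alpha$. In other words, every entry of the $\gamma$-th column of $(\mathcal{H}+K)^{-1}$ is strictly positive. Since $\gamma$ was arbitrary, every column, and hence every entry, of $(\mathcal{H}+K)^{-1}$ is positive, which is the claim.

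The main obstacle I anticipate is not analytic but bookkeeping: ensuring the boundary nodes $\alpha'$ and the difference between $|N(\alpha)|$ in $\mathcal{G}_F$ versus $\mathcal{G}_F'$ are handled consistently, so that the hypotheses of the maximum-principle Lemma (proven on $\mathcal{G}_F'$ with the Dirichlet data $v_{\alpha'} = 0$) translate correctly into statements about the restricted matrix on $\mathcal{G}_F$. Once the extension-by-zero convention is fixed, the argument is purely a matter of checking the signs of the right-hand sides, and the irreducibility of $\mathcal{G}_F'$ used in the strict clause of the Lemma does the rest.
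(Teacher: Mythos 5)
Your proposal is correct and takes essentially the same approach as the paper: apply the maximum-principle Lemma to the solution of $(\mathcal{H}+K)\,v = e_\gamma$ for each node $\gamma$ (i.e.\ right-hand side $f_\beta = \delta_{\gamma\beta}$), with the strict clause of that Lemma yielding entry-wise strict positivity of each column. Your only addition is the explicit trivial-kernel argument for invertibility, which the paper simply assumes at this point and in effect supplies later in its lemma on the positivity of the spectrum of $\mathcal{H}+K$.
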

\begin{proof}
Suppose $\langle \alpha| \mathcal{H} + K |v\rangle = f_\alpha$ for $f_\alpha \geq 0$.  By Lemma 1, $v_\alpha \geq 0$.  Inverting this equation and solving for $v_\alpha$ gives:
\begin{equation}
    v_\alpha = \sum_{\beta} \left(\mathcal{H} + K\right)^{-1}_{\alpha \beta}f_\beta. 
\end{equation}
Calling $f_\beta = \delta_{\gamma \beta}$, then $0 \leq v_\alpha = \left(\mathcal{H} + K\right)^{-1}_{\alpha \gamma}$.  Repeating for all $\alpha$ and $\gamma$ finishes the proof.  Strict positivity follows from the fact that the above choice of $f$ satisfies $f_\alpha > 0$ for at least one value of $\alpha$.
\end{proof}
In this next Lemma, we show that the spectrum of $\mathcal{H}+K$ is strictly positive.
\begin{lemma}
The matrix $\mathcal{H} + K$ has positive eigenvalues.
\end{lemma}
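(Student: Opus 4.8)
The plan is to combine the entrywise positivity of the inverse from Lemma~2 with the sign structure of the off-diagonal entries and the Perron--Frobenius theorem. First I would record the elementary facts. Write $A$ for the restriction of $\mathcal{H}+K$ to $\mathcal{G}_F$. Since $\mathcal{H}$ is Hermitian and $K$ is a real constant shift, $A$ is a real symmetric matrix, so its spectrum is real. Moreover, the only off-diagonal contributions to $A$ come from $-t$ times the adjacency matrix of $\mathcal{G}_F$, and $t\ge 0$, so every off-diagonal entry satisfies $A_{\alpha\beta}\le 0$. Thus $A$ is a symmetric matrix with nonpositive off-diagonal entries (a Z-matrix), and Lemma~2 shows that $A^{-1}$ exists and is entrywise positive.

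Next I would pass to a nonnegative matrix by a spectral shift. Choose a constant $c$ large enough that $C:=cI-A$ has only nonnegative entries: the off-diagonal entries of $C$ equal $t\ge 0$, and the diagonal entries are nonnegative once $c\ge \max_\alpha A_{\alpha\alpha}$. Because $\mathcal{G}_F$ is connected (the same irreducibility invoked in the proof of Lemma~1), $C$ is an irreducible nonnegative matrix, so the Perron--Frobenius theorem applies: its spectral radius $\rho(C)$ is a simple eigenvalue whose eigenvector $w$ may be taken strictly positive in every component. Since $C$ is symmetric its eigenvalues are real and $\rho(C)$ is in fact its largest eigenvalue; equivalently, $w$ is an eigenvector of $A$ with eigenvalue $\lambda_0=c-\rho(C)$, and this $\lambda_0$ is the \emph{smallest} eigenvalue of $A$.

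The final step fixes the sign of $\lambda_0$. By Lemma~2 the matrix $A$ is invertible, so $\lambda_0\ne 0$ and $A^{-1}w=\lambda_0^{-1}w$. The left-hand side is the product of the entrywise-positive matrix $A^{-1}$ with the entrywise-positive vector $w$, hence is strictly positive in every component; since $w$ is itself strictly positive, the scalar $\lambda_0^{-1}$ must be positive, and therefore $\lambda_0>0$. As $\lambda_0$ is the smallest eigenvalue of $A$, all eigenvalues of $\mathcal{H}+K$ are positive. I expect the main obstacle here to be conceptual rather than computational: entrywise positivity of $A^{-1}$ does \emph{not} by itself force $A$ to be positive definite, since a symmetric matrix with positive entries can have negative eigenvalues. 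The argument therefore genuinely needs the Z-matrix sign structure so that Perron--Frobenius identifies the extremal eigenvector of $A$ as the strictly positive one, after which the positivity of the inverse pins down the sign of its eigenvalue. (Equivalently, one may simply observe that a symmetric Z-matrix with an entrywise nonnegative inverse is a symmetric M-matrix and hence positive definite.)
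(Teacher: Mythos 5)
Your proof is correct, but it takes a genuinely different route from the paper's. The paper never invokes Perron--Frobenius or M-matrix theory: it first shows the null space of $\mathcal{H}+K$ is trivial --- if $(\mathcal{H}+K)\ket{v}=0$, Lemma~1 applies to both $\ket{v}$ and $-\ket{v}$, forcing $v_\alpha\ge 0$ and $-v_\alpha\ge 0$, hence $\ket{v}=0$ --- and then notes that Lemma~1, and therefore this null-space argument, holds for every shift $K'\ge K$; a negative eigenvalue $-\lambda$ of $\mathcal{H}+K'$ would make $\mathcal{H}+K''$ singular for $K''=K'+\lambda> K$, a contradiction. That argument uses only Lemma~1 together with the monotonicity of its hypothesis in $K$, and in particular it establishes invertibility from scratch. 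Your argument instead leans on Lemma~2, whose \emph{statement} asserts that the inverse exists but whose proof in the paper tacitly presupposes solvability; in the paper's logical order, invertibility is only really justified by the null-space step inside the proof of Lemma~3, so a fully self-contained version of your proof should include that step (or cite the standard Z-matrix/M-matrix equivalences that supply it). What your route buys in exchange is strictly more information: Perron--Frobenius applied to $cI-A$ identifies the smallest eigenvalue of $\mathcal{H}+K$ as simple with a strictly positive eigenvector (a Perron ground state), and you correctly flag the conceptual point that entrywise positivity of $A^{-1}$ alone cannot force positive definiteness --- the Z-matrix sign structure and irreducibility are essential. One minor caveat: irreducibility of $cI-A$ requires $t>0$ (for $t=0$ the matrix is diagonal and the lemma is trivial), and connectivity of $\mathcal{G}_F$ must be assumed, exactly as it already is in the paper's proof of Lemma~1.
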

\begin{proof}
From Lemma 1, if $(\mathcal{H} + K) |v\rangle = 0$, then $v_\alpha \geq 0$.  Suppose the null space of $\mathcal{H} + K$ is not trivial; then it must contain some state $|v\rangle \neq 0$.  Thus, both $|v\rangle$ and $-|v\rangle$ are in the null space, which contradicts Lemma 1 since $v_\alpha$ cannot be negative; it follows that the null space of $\mathcal{H} + K$ is trivial.  This implies that there is no eigenvalue equal to zero.

It is trivial to show that Lemma 1 is satisfied for any choice of $K' > K$, and $\mathcal{H} + K'$ does not have any zero eigenvalues for $K' > K$.  Suppose, for the sake of contradiction, that there exists a negative eigenvalue $-\lambda$ of $\mathcal{H} + K'$.  Then, setting $K'' = K' + \lambda \geq K'$, it is clear that $\mathcal{H} + K''$ has a zero eigenvalue.  This contradicts the fact that $\mathcal{H} + K'$ does not have any zero eigenvalues for $K' > K$.  This completes the proof.
\end{proof}
We next define the MBLL to be the solution to 
\begin{equation}
    (\mathcal{H}+K) |u\rangle = |1\rangle,
\end{equation}
where $|1\rangle$ denotes the vector of all ones.  In an identical manner to the proof in Mayboroda et al.~\cite{ll-dual}, we arrive at the following bound on the eigenstates via the landscape function:
\begin{theorem}
Assume $(\mathcal{H}+K)|\psi\rangle = (E+K)|\psi\rangle$ with $\psi_{\alpha'} = 0$, where $E$ is the many-body energy, and $K$ is defined in Equation \ref{eq:Kdef}.  It follows that for all $\alpha$,
\begin{equation}
    \frac{|\psi_\alpha|}{\max_\beta |\psi_\beta|} \leq (E + K)u_\alpha.
\end{equation}
\end{theorem}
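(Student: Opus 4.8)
The plan is to invert the eigenvalue equation and exploit the entrywise positivity established above. First I would rewrite $(\mathcal{H}+K)|\psi\rangle = (E+K)|\psi\rangle$ as
\[
|\psi\rangle = (E+K)\,(\mathcal{H}+K)^{-1}|\psi\rangle,
\]
which is legitimate because Lemma 3 guarantees that $\mathcal{H}+K$ has strictly positive spectrum on $\mathcal{G}_F$ and is therefore invertible; the same lemma gives $E+K>0$, a fact I will need so that multiplying through by $E+K$ does not reverse any inequalities. Written out componentwise, this reads $\psi_\alpha = (E+K)\sum_\beta (\mathcal{H}+K)^{-1}_{\alpha\beta}\psi_\beta$.

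Next I would take absolute values and apply the triangle inequality. The crucial input is Lemma 2: each entry $(\mathcal{H}+K)^{-1}_{\alpha\beta}$ is nonnegative, so the triangle inequality passes the modulus inside the sum with no sign ambiguity, yielding
\[
|\psi_\alpha| \;\leq\; (E+K)\sum_\beta (\mathcal{H}+K)^{-1}_{\alpha\beta}\,|\psi_\beta|.
\]
Bounding each $|\psi_\beta|$ by $\max_\gamma |\psi_\gamma|$ and pulling this constant out of the nonnegative sum gives
\[
|\psi_\alpha| \;\leq\; (E+K)\Big(\max_\gamma |\psi_\gamma|\Big)\sum_\beta (\mathcal{H}+K)^{-1}_{\alpha\beta}.
\]

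The final step is to recognize the row sum as the landscape function itself. Since $u$ solves $(\mathcal{H}+K)|u\rangle = |1\rangle$, inverting shows $u_\alpha = \sum_\beta (\mathcal{H}+K)^{-1}_{\alpha\beta}\cdot 1$, i.e. the $\alpha$-th row sum of the inverse is exactly $u_\alpha$. Substituting and dividing by $\max_\beta|\psi_\beta|$ (which is positive since $\psi\neq 0$) produces the claimed bound $|\psi_\alpha|/\max_\beta|\psi_\beta| \leq (E+K)u_\alpha$.

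I do not anticipate a genuine obstacle in this argument, which is a direct transcription of the discrete single-particle proof of Ref.~\cite{ll-dual}: all of the nontrivial content has been front-loaded into the three preceding lemmas. The only point that demands care is the \emph{positivity bookkeeping}---confirming both that $E+K>0$ (via Lemma 3) and that the inverse is entrywise nonnegative (via Lemma 2) \emph{before} invoking the triangle inequality---so that the direction of every inequality is preserved throughout.
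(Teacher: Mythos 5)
Your proposal is correct and follows essentially the same route as the paper's own proof: invert the eigenvalue equation via Lemma 3, apply the triangle inequality using the entrywise positivity of $(\mathcal{H}+K)^{-1}$ from Lemma 2, and identify the row sum $\sum_\beta (\mathcal{H}+K)^{-1}_{\alpha\beta}$ with $u_\alpha$ from the defining equation $(\mathcal{H}+K)|u\rangle = |1\rangle$. Your explicit remark that $E+K>0$ (so the inequality directions are preserved) is a small point the paper leaves implicit, but otherwise the two arguments coincide line for line.
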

\begin{proof}
We use the two Lemmas that have been derived to write:
\begin{align}
    |\psi_\alpha| &= (E+K) \left |\sum_{\beta} \left(\mathcal{H} + K\right)^{-1}_{\alpha \beta} \psi_\beta\right|,\nonumber\\
    &\leq (E+K) \max_\beta |\psi_\beta| \sum_{\beta} \left(\mathcal{H} + K\right)^{-1}_{\alpha \beta},\nonumber\\
    &= (E+K) \max_\beta |\psi_\beta| u_\alpha,
\end{align}
where in the first line we use Lemma 3, in the second line we use the fact that $\left(\mathcal{H} + K\right)^{-1}_{\alpha \beta}$ is positive (from Lemma 2), and in the third line we use the definition of $u_\alpha$.
\end{proof}
Surprisingly, Mayboroda's single particle bound carries over in the many-body case even in the presence of interactions.  However, the visualization of this bound is more complex.  Rather than viewing the landscape function on a standard plot to determine domains of localization, one must view the landscape function with respect to the Fock space graph $\mathcal{G}_F$.  Localization occurs in \emph{connected} components of $\mathcal{G}_F$ with large values of $u$ which are surrounded by valleys with small values of $u$.

Also notice that our bounds depend on the choice of $K$.  While $K$ is arbitrary in principle, it is lower bounded.  Choosing too large of $K$ results in landscape functions that are not good representatives of the localization regions for the eigenstates.

From this point on, we denote $E' = E+K$ for simplicity and readability.  We will also change notation by calling $N(\alpha)$ the set of neighbors of node $\alpha$ in $\mathcal{G}_F$.

\section{MBLL as an Effective Potential}

The bound derived in the last section gives evidence that the MBLL informs us about possible localization in the Fock space.  This is consistent with the general qualitative picture of many-body localization, as discussed by Basko, Aleiner, and Altshuler~\cite{BAA1,BAA2,BAA3}, who proposed studying MBL in terms of the localization of eigenstates in Fock space.  However, to understand whether eigenfunctions exponentially decay outside of localized regions, we must borrow intuition from the single particle case.  Here, Mayboroda et al.~\cite{potential-1,potential-2} showed that $1/u$ behaves like an effective potential that confines the eigenstates.  This effective potential is often more informative than considering the original on-site potential, as it incorporates both disorder and the kinetic energy  in a non-perturbative way and more accurately identifies regions where eigenstates  localize.  We want to understand whether this analogy extends to the MBLL, as this gives significant insight into when and where in the Fock space we would expect strong localization to occur.  The result we show is that the rate of decay of eigenfunctions depends on the connectivity of a particular portion of $\mathcal{G}_F$: if a node has fewer neighbors, the resulting states have large contiguous regions of occupied sites and will more strongly localize.

The method we use to analyze this decay was first developed by Shmuel Agmon~\cite{agmon-1,agmon-2} several decades ago.  It essentially formalizes a semi-classical WKB analysis of wavefunction tunneling to exact upper bounds on the magnitude of the wavefunction, which are known as \emph{Agmon inequalities}.

\subsection{Effective Schrodinger Equation via MBLL}
 We will first proceed to rewrite the eigenvalue equation in terms of the MBLL.  Let us start by noting that the eigenstates $|\psi\rangle = \sum_\alpha \psi_\alpha |\alpha\rangle$ of the many-body Hamiltonian satisfy
\begin{equation}
    -t \sum_{\beta \in N(\alpha)} \psi_\beta + \left(\sum_{\langle i, j\rangle} V n_i^{(\alpha)} n_j^{(\alpha)} + \sum_i \epsilon_i n_i^{(\alpha)} \right)\psi_\alpha = E \psi_\alpha.
\end{equation}
For brevity, we call the term in the parenthesis $V_\alpha$.  Next, we perform the substitution $\psi_\alpha = \Psi_\alpha u_\alpha$.  We may then write
\begin{equation}
    -t \sum_{\beta \in N(\alpha)} \Psi_\beta u_\beta + V_\alpha \Psi_\alpha u_\alpha = E \Psi_\alpha u_\alpha,
\end{equation}
which can be rearranged as
\begin{align}
    &\left[-t \sum_{\beta \in N(\alpha)} u_\beta + (V_\alpha + K) u_\alpha\right] \Psi_\alpha - \\ &t \sum_{\beta \in N(\alpha)} u_\beta(\Psi_\beta-\Psi_\alpha) = E' \Psi_\alpha u_\alpha.\nonumber
\end{align}
The quantity in the brackets equals 1 by definition of the MBLL, and if we rewrite the equation in terms of $\psi_\alpha$, we obtain the equation
\begin{equation}\label{eq:landschromassaged}
     - t \sum_{\beta \in N(\alpha)} u_\beta u_\alpha \left(\frac{\psi_\beta}{u_\beta}-\frac{\psi_\alpha}{u_\alpha}\right) + \psi_\alpha = E' u_\alpha \psi_\alpha.
\end{equation}
This is a more revealing form, as the first term is equal to $t$ times a weighted Laplacian matrix $L$ acting on the state $|\psi/u \rangle = \sum_\alpha (\psi_\alpha/u_\alpha) |\alpha\rangle$.  A Laplacian matrix with weighted edges $w_{\alpha \beta}$ has elements equal to:
\begin{equation}
L_{\alpha \beta} = \left\{\begin{array}{lr}
        -w_{\alpha \beta}, & \text{for } \alpha \sim \beta\\
        \sum_{\alpha \in N(\beta)} w_{\alpha \beta}, & \text{for } \alpha = \beta
        \end{array}\right.
\end{equation}
where $\alpha \sim \beta$ means that nodes $\alpha$ and $\beta$ are connected in $\mathcal{G}_F$.  This quantity is a discrete generalization of the differential operator for a graph whose distance metric is given by the weights $w_{\alpha \beta}$.  In the above equation, the weights of the Laplacian are $w_{\alpha \beta} = u_\alpha u_\beta$, and so are symmetric with respect to interchanging $\alpha$ and $\beta$.  Moreover, they are also positive as per the analysis of the previous section.  Write the LHS of Equation \ref{eq:landschromassaged} as $M |\psi\rangle$, where $M_{\alpha \beta} = t L_{\alpha \beta}/u_\beta + \delta_{\alpha \beta}$.  Next, we apply $\langle \psi/u|$ to the matrix version of the above equation:
\begin{align}\label{eq:effschr}
   \mel{\frac{\psi}{u}}{M}{\psi} &= t\sum_{\langle \alpha, \beta \rangle}u_\alpha u_\beta\left(\frac{\psi_\alpha}{u_\alpha}-\frac{\psi_\beta}{u_\beta}\right)^2 + \sum_\alpha \frac{\psi_\alpha^2}{u_\alpha},\nonumber\\ &= E' \sum_\alpha \psi_\alpha^2,
\end{align}
where for the first term we use the well-known identity 
\begin{equation}
    \langle x| L |x\rangle = \sum_{\langle \alpha,\beta\rangle} w_{\alpha \beta} (x_\alpha - x_\beta)^2,
\end{equation}
which holds for any weighted Laplacian with $w_{\alpha \beta} = w_{\beta \alpha}$.  Equation \ref{eq:effschr} indicates that $1/u$ also acts as a confining potential.  The first term is the expected kinetic energy of the particle, which has been renormalized by the MBLL.  The second term denotes an expectation of a ``potential energy,'' if $1/u$ is treated as an effective potential.  However, this does not necessarily tell us any information of the behavior of eigenfunctions under this effective potential.  To analyze the decay of eigenfunctions, we will need to generalize  Agmon estimates to the Fock-space graph as discussed at the beginning of the section.

\subsection{Agmon Estimates for Eigenfunction Decay}

The intuition behind the derivation of Agmon estimates is to consider the behavior of an eigenstate modulated by an exponentially large amplitude and show that this new function is bounded in a certain sense.  This immediately shows that the eigenstate exponentially decays in particular regions.  As we will show, this exponential amplitude is obtained by minimizing an effective action on $\mathcal{G}_F$, and also depends on the connectivity of a particular node in the graph.

We first prove some identities obtained through brute-force computation.
\begin{lemma}
\label{Slemma}
Given a choice of $S_\alpha \in \mathbb{R}$, the following equation holds:
\begin{equation}
    \sum_\alpha \psi_\alpha^2 e^{2S_\alpha}\left[\frac{1}{u_\alpha}-E'-\frac{t}{2}\sum_{\beta \in N(\alpha)}(e^{S_\beta-S_\alpha} - 1)^2\right] \leq 0.
    \label{Eq:Slemma}
\end{equation}
\end{lemma}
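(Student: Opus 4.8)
The plan is to run the discrete, landscape-weighted analogue of Agmon's multiplier argument: test the landscape-transformed eigenvalue equation against the eigenfunction dressed by the exponential weight $e^{2S_\alpha}$, and then reorganize the hopping term by summation by parts until a manifestly nonpositive square can be discarded, leaving exactly the claimed ``effective potential'' term. Throughout I take $\psi$ real, which is legitimate since the shifted Hamiltonian $\mathcal{H}+K$ is real symmetric and the estimate involves only $\psi_\alpha^2$; I also use $u_\alpha>0$ (established in the maximum-principle lemmas) and $t\ge 0$ (the model assumption). Concretely, I would start from the effective equation \eqref{eq:landschromassaged}, write $\Psi_\alpha=\psi_\alpha/u_\alpha$, divide by $u_\alpha$, multiply through by the test weight $\psi_\alpha e^{2S_\alpha}/u_\alpha$, and sum over all nodes. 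The onsite $+\psi_\alpha$ term contributes $\sum_\alpha \psi_\alpha^2 e^{2S_\alpha}/u_\alpha$ and the eigenvalue term contributes $E'\sum_\alpha\psi_\alpha^2 e^{2S_\alpha}$, so after rearranging I obtain the exact identity
\[
\sum_\alpha \psi_\alpha^2 e^{2S_\alpha}\left(\frac{1}{u_\alpha}-E'\right) = t\sum_{\alpha}\sum_{\beta\in N(\alpha)} u_\alpha u_\beta\,\Psi_\alpha\, e^{2S_\alpha}(\Psi_\beta-\Psi_\alpha)\equiv tT.
\]
It then remains only to bound the edge sum $T$ from above by $\tfrac12\sum_\alpha\psi_\alpha^2 e^{2S_\alpha}\sum_{\beta\in N(\alpha)}(e^{S_\beta-S_\alpha}-1)^2$.

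Next I would symmetrize $T$ over the two orientations of each edge, using $\beta\in N(\alpha)\iff\alpha\in N(\beta)$ together with the symmetry $w_{\alpha\beta}=u_\alpha u_\beta=w_{\beta\alpha}$; relabeling one copy of the directed sum shows $T=\tfrac12\sum_{\alpha}\sum_{\beta\in N(\alpha)} w_{\alpha\beta}(\Psi_\beta-\Psi_\alpha)(\Psi_\alpha e^{2S_\alpha}-\Psi_\beta e^{2S_\beta})$. The heart of the argument is the elementary completing-the-square identity, which with $a=\Psi_\alpha,\,b=\Psi_\beta,\,p=e^{S_\alpha},\,q=e^{S_\beta}$ reads
\[
(b-a)(ap^2-bq^2) = -(ap-bq)^2 + ab\,(p-q)^2.
\]
Since $w_{\alpha\beta}\ge 0$, the $-(ap-bq)^2$ piece is nonpositive and can be dropped, leaving $T\le \tfrac12\sum_{\alpha}\sum_{\beta\in N(\alpha)} w_{\alpha\beta}\,\Psi_\alpha\Psi_\beta(e^{S_\alpha}-e^{S_\beta})^2$. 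The landscape factors then cancel cleanly, because $w_{\alpha\beta}\Psi_\alpha\Psi_\beta=u_\alpha u_\beta\,(\psi_\alpha/u_\alpha)(\psi_\beta/u_\beta)=\psi_\alpha\psi_\beta$.

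To finish, I would convert the symmetric $\psi_\alpha\psi_\beta$ weight into the per-node $\psi_\alpha^2$ form the statement requires. Summing over directed edges and pairing each edge's two orientations, the elementary bound $2\psi_\alpha\psi_\beta\le\psi_\alpha^2+\psi_\beta^2$ gives $\sum_{\alpha}\sum_{\beta\in N(\alpha)}\psi_\alpha\psi_\beta(e^{S_\alpha}-e^{S_\beta})^2\le\sum_{\alpha}\sum_{\beta\in N(\alpha)}\psi_\alpha^2(e^{S_\alpha}-e^{S_\beta})^2$. Recognizing $\psi_\alpha^2(e^{S_\alpha}-e^{S_\beta})^2=\psi_\alpha^2 e^{2S_\alpha}(e^{S_\beta-S_\alpha}-1)^2$, substituting back into the identity for $tT$, and moving everything to one side yields the claimed inequality \eqref{Eq:Slemma}.

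The step I expect to be the main obstacle is the reorganization of the hopping term: arranging the completing-the-square identity so that the discarded perfect square carries the \emph{positive} symmetric weight $u_\alpha u_\beta$ (rather than, say, $u_\alpha^2$), which is what lets positivity of $u$ do the work. A secondary subtlety is that the target carries $\psi_\alpha^2$ while the symmetrized hopping term naturally produces $\psi_\alpha\psi_\beta$; this mismatch is exactly where the final AM--GM estimate $2\psi_\alpha\psi_\beta\le\psi_\alpha^2+\psi_\beta^2$ enters and turns the chain of identities into an inequality, so it is worth verifying that this last step points in the correct direction. Everything else is routine bookkeeping of directed-edge sums.
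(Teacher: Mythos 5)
Your proposal is correct and follows essentially the same route as the paper's proof: testing the landscape-transformed eigenvalue equation against $(\psi_\alpha/u_\alpha)e^{2S_\alpha}$ (the paper phrases this as evaluating $\mel{\frac{\psi}{u}e^{2S}}{M}{\psi}$ two ways), applying the same completing-the-square identity so the discarded square carries the positive weight $u_\alpha u_\beta$ while the retained term collapses to $\psi_\alpha\psi_\beta(e^{S_\alpha}-e^{S_\beta})^2$, and finishing with $2\psi_\alpha\psi_\beta\leq\psi_\alpha^2+\psi_\beta^2$ plus edge-orientation symmetrization to produce the factor $t/2$. The only differences are bookkeeping conventions (directed sums throughout versus undirected edge sums converted at the end).
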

\begin{proof}
We first compute the seemingly unrelated quantity $\mel{\frac{\psi}{u}e^{2S}}{M}{\psi}$.  From manipulating the first line of Equation \ref{eq:effschr}, this is equal to
\begin{align}
   \mel{\frac{\psi}{u}e^{2S}}{M}{\psi} = &t\sum_{\langle \alpha, \beta \rangle}u_\alpha u_\beta\left(\frac{\psi_\alpha}{u_\alpha} e^{2S_\alpha}-\frac{\psi_\beta}{u_\beta}e^{2S_\beta}\right)\times \nonumber \\ &\left(\frac{\psi_\alpha}{u_\alpha} - \frac{\psi_\beta}{u_\beta}\right) + \sum_\alpha \frac{\psi_\alpha^2}{u_\alpha} e^{2S_\alpha}.
\end{align}
Furthermore, manipulating the second line of Equation \ref{eq:effschr}, in particular utilizing the relationship between $M |\psi\rangle$ and the many-body energy $E$, the above quantity can be rewritten in terms of $E'$:
\begin{equation}
    \mel{\frac{\psi}{u}e^{2S}}{M}{\psi} = E' \sum_\alpha \psi_\alpha^2 e^{2S_\alpha}.
\end{equation}
By equating both of these equations and manipulating with simple algebra, we find the following identity:
\begin{align}
    t&\sum_{\langle \alpha, \beta \rangle}u_\alpha u_\beta\left(\frac{\psi_\alpha}{u_\alpha} e^{S_\alpha}-\frac{\psi_\beta}{u_\beta}e^{S_\beta}\right)^2 + \sum_\alpha \frac{\psi_\alpha^2}{u_\alpha} e^{2S_\alpha} - \nonumber \\ & t\sum_{\langle \alpha, \beta \rangle}\psi_\alpha \psi_\beta (e^{S_\alpha} - e^{S_\beta})^2 = E' \sum_\alpha \psi_\alpha^2 e^{2S_\alpha}.
\end{align}
Using the fact that $u_\alpha$ is positive, the first term is positive and the identity may be written as the inequality
\begin{equation}
    \sum_\alpha \psi_\alpha^2 e^{2S_\alpha}\left(\frac{1}{u_\alpha}-E'\right) - t\sum_{\langle\alpha, \beta\rangle}\psi_\alpha \psi_\beta (e^{S_\alpha} - e^{S_\beta})^2 \leq 0.
\end{equation}
Finally, we note the inequality $\psi_\alpha \psi_\beta \leq \frac{1}{2}(\psi_\alpha^2 + \psi_\beta^2)$ true for any $\psi$ and the fact that the second sum in the above equation is symmetric in the indices $\alpha$ and $\beta$.  Applying this inequality and symmetrizing the second sum with respect to $\alpha$ and $\beta$, we find Equation~(\ref{Eq:Slemma}),
where we add an additional factor of two from splitting the sum over edges to a double sum over nodes and avoiding double-counting pairs of nodes.
\end{proof}
Notice that the previous Lemma~\ref{Slemma} is general: it holds for an arbitrary choice of $S$.  Next, we choose $S$ so that the cumulative amplitude of the wavefunction in the classically forbidden region is ``small.''  By classically forbidden, we mean regions for which the many-body energy is smaller than the value of the effective potential $1/u$.  We make rigorous this notion in the following Lemma.
\begin{lemma}
There exists a choice $S_\alpha$ for all $\alpha$ such that for any $1>\epsilon > 0$, the energy inequality can be refined to
\begin{equation}
    \sum_\alpha \psi_\alpha^2 e^{2S_\alpha}\left(\frac{1}{u_\alpha}-E'\right) \leq (1-\epsilon) \sum_\alpha \psi_\alpha^2 e^{2S_\alpha}\left(\frac{1}{u_\alpha}-E'\right)_+,
\end{equation}
where $(x)_+ = \max(x,0)$.
\end{lemma}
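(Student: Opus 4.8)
The plan is to deduce the refined inequality from the general identity of the previous Lemma by a good, \emph{nontrivial} choice of the weights $S_\alpha$, namely a discrete Agmon distance. Write $W_\alpha \equiv 1/u_\alpha - E'$ and
\[
C_\alpha(S) \equiv \frac{t}{2}\sum_{\beta\in N(\alpha)}\left(e^{S_\beta-S_\alpha}-1\right)^2 \ge 0 .
\]
The previous Lemma states $\sum_\alpha \psi_\alpha^2 e^{2S_\alpha}\left(W_\alpha - C_\alpha(S)\right)\le 0$, i.e. $\sum_\alpha\psi_\alpha^2 e^{2S_\alpha}W_\alpha \le \sum_\alpha\psi_\alpha^2 e^{2S_\alpha}C_\alpha(S)$. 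Hence it suffices to exhibit an $S$ obeying the \emph{pointwise eikonal bound} $C_\alpha(S)\le (1-\epsilon)\,(W_\alpha)_+$ for every $\alpha$, since multiplying by $\psi_\alpha^2 e^{2S_\alpha}\ge 0$ and summing collapses the two displays into the claim. Note that $S\equiv 0$ already satisfies this bound trivially; the actual goal is therefore the \emph{largest} admissible $S$, which is what produces useful exponential decay downstream, so the construction below is designed to saturate the budget rather than merely satisfy it.

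First I would split the vertex set into the classically allowed region $A=\{\alpha: W_\alpha\le 0\}$ and its complement, the forbidden region. To each node I assign a local growth scale $g_\alpha\ge 0$ obtained by saturating the single-node budget,
\[
\frac{t}{2}\,|N(\alpha)|\,\left(e^{g_\alpha}-1\right)^2 = (1-\epsilon)\,(W_\alpha)_+,\qquad\text{i.e.}\qquad g_\alpha = \ln\!\left(1+\sqrt{\tfrac{2(1-\epsilon)(W_\alpha)_+}{t\,|N(\alpha)|}}\,\right),
\]
so that $g_\alpha=0$ precisely on $A$. I then endow each edge with the \emph{symmetric} length $\ell_{\alpha\beta}=\min(g_\alpha,g_\beta)$ and define $S_\alpha$ as the geodesic distance from $A$ in this metric, $S_\alpha = \min_{P:\,A\to\alpha}\sum_{(\gamma,\delta)\in P}\ell_{\gamma\delta}$. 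This is the discrete analogue of the Agmon metric $ds=\sqrt{(V-E)_+}\,|dx|$, with the local barrier height $(W_\alpha)_+$ and the Fock-graph connectivity $|N(\alpha)|$ together fixing the tunnelling rate.

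To verify the bound, I would use that a distance-to-a-set is $1$-Lipschitz, so the triangle inequality gives $|S_\beta-S_\alpha|\le \ell_{\alpha\beta}$ on every edge. Combining this with the elementary estimate $(e^x-1)^2\le (e^{\ell}-1)^2$ valid for all $|x|\le \ell$ (which reduces to $1-e^{-\ell}\le e^{\ell}-1$, i.e. $e^{\ell}+e^{-\ell}\ge 2$) and with $\ell_{\alpha\beta}\le g_\alpha$, each neighbour contributes $(e^{S_\beta-S_\alpha}-1)^2\le (e^{g_\alpha}-1)^2$, whence
\[
C_\alpha(S)\le \frac{t}{2}\,|N(\alpha)|\,\left(e^{g_\alpha}-1\right)^2 = (1-\epsilon)\,(W_\alpha)_+ .
\]
For $\alpha\in A$ the bound holds with room to spare: there $g_\alpha=0$, so every incident edge has $\ell_{\alpha\beta}=0$, forcing $S_\beta=S_\alpha$ on all neighbours and hence $C_\alpha(S)=0=(W_\alpha)_+$. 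This establishes the eikonal bound at every node and, by the reduction above, the Lemma.

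The step I expect to require the most care is controlling $C_\alpha(S)$, whose sum runs over \emph{all} neighbours of $\alpha$ — both those deeper in the forbidden region ($S_\beta>S_\alpha$) and those nearer to $A$ ($S_\beta<S_\alpha$). A directed, one-sided definition of $S$ would leave the ``return'' edges uncontrolled; it is precisely the symmetric choice $\ell_{\alpha\beta}=\min(g_\alpha,g_\beta)$ together with the two-sided Lipschitz property, and the inequality $1-e^{-\ell}\le e^{\ell}-1$, that tames both signs within a single budget. The second delicate point is that the degree $|N(\alpha)|$ enters $g_\alpha$, so high Fock-space connectivity suppresses the admissible growth of $S$; sharing the budget across the $|N(\alpha)|$ neighbours rather than overspending it is what keeps the estimate valid, and this same dependence is what later ties the decay rate to the local structure of $\mathcal{G}_F$. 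A minor boundary case, namely $A=\varnothing$, is handled by measuring $S$ from a minimizer of $W$ (or simply retaining the trivial $S$), and is not a genuine obstruction.
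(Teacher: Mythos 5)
Your proposal is correct and follows essentially the same route as the paper's proof: your edge length $\ell_{\alpha\beta}=\min(g_\alpha,g_\beta)$ is exactly the paper's choice of $|\mathcal{L}_{\alpha,\beta}|$, your geodesic distance $S_\alpha$ from the classically allowed region matches the paper's Eq.~(\ref{eq:action}), and the Lipschitz/triangle-inequality step followed by the per-node budget $(1-\epsilon)\left(\frac{1}{u_\alpha}-E'\right)_+$ reproduces the paper's chain of estimates. The only differences are presentational (your explicit reduction to a pointwise eikonal bound and the verification that $(e^x-1)^2\le(e^{\ell}-1)^2$ for $|x|\le\ell$, which the paper leaves implicit).
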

\begin{proof}
First, define $\overline{U}$ to be the set of nodes $j$ which reside in the classically allowed region, where $\frac{1}{u_j}-E' < 0$.  Suppose that $S_\alpha$ satisfies:
\begin{equation}\label{eq:action}
    S_\alpha = \inf_{\substack{p(0) \in \overline{U} \\ p(-1) = \alpha}} \sum_{k} \left|\mathcal{L}_{p(k-1), p(k)}\right|,
\end{equation}
where the infimum is over paths from a point in a classically allowed region to $\alpha$.  By path, we mean a list of nodes such that each consecutive pair has an edge in $\mathcal{G}_F$; this is denoted by the vector $p$ where $p(i)$ is the $i$th element in the path.  The notation $p(-1)$ denotes the last element in the path.  We also assume that $|\mathcal{L}_{\alpha,\beta}| = |\mathcal{L}_{\beta,\alpha}|$.  It straightforwardly follows that, for $\beta \in N(\alpha)$:
\begin{align}
    S_\beta \leq S_\alpha + |\mathcal{L}_{\alpha,\beta}|,
\end{align}
and similarly $S_\beta \geq S_\alpha -|\mathcal{L}_{\beta,\alpha}|$.  Therefore, $|S_\beta - S_\alpha| \leq |\mathcal{L}_{\alpha,\beta}| = |\mathcal{L}_{\beta,\alpha}|$.  This essentially states that $S$ satisfies a triangle inequality, and therefore is a valid metric on $\mathcal{G}_F$.  Then, we may write
\begin{align}\label{eq:lagchoice}
    \frac{t}{2}\sum_{\beta \in N(\alpha)}(e^{S_\beta-S_\alpha} - 1)^2 &\leq \frac{t}{2}\sum_{\beta \in N(\alpha)}(e^{|S_\beta-S_\alpha|} - 1)^2,\nonumber\\
    &\leq \frac{t}{2}\sum_{\beta \in N(\alpha)}(e^{|\mathcal{L}_{\alpha,\beta}|} - 1)^2.
\end{align}
Next, we make the following choice for $|\mathcal{L}_{\alpha,\beta}| = |\mathcal{L}_{\beta,\alpha}|$: 
\begin{equation}
    |\mathcal{L}_{\alpha,\beta}| = \min_{k=\alpha,\beta}\left[\log\left(1+\sqrt{\frac{2(1-\epsilon)}{t |N(k)|}\left(\frac{1}{u_k}-E'\right)_+}\right)\right].
\end{equation}
With this choice, we may further bound Equation \ref{eq:lagchoice} by
\begin{align}
    \frac{t}{2}\!\!\sum_{\beta \in N(\alpha)}(e^{|\mathcal{L}_{\alpha,\beta}|} - 1)^2 &= (1-\epsilon)\!\!\!\sum_{\beta\in N(\alpha)}\min_{k=\alpha,\beta}\frac{\left({u_k}^{-1}-E'\right)_+}{|N(k)|}, \nonumber\\
    &\leq (1-\epsilon)\left(\frac{1}{u_\alpha}-E'\right)_+.
\end{align}
Substituting this inequality into the last term on the LHS of Equation \ref{Eq:Slemma}, we find the desired refinement of Equation \ref{Eq:Slemma}:
\begin{equation}
    \sum_\alpha \psi_\alpha^2 e^{2S_\alpha}\left(\frac{1}{u_\alpha}-E'\right) \leq (1-\epsilon) \sum_\alpha \psi_\alpha^2 e^{2S_\alpha}\left(\frac{1}{u_\alpha}-E'\right)_+.
\end{equation}
\end{proof}
The choice of variables $S$ and $\mathcal{L}$ suggests that the function $S_\alpha$ can be viewed as a saddle-point approximation on an effective action $S$ which is a sum of Lagrangian terms along a path from a node in a classically allowed region to $\alpha$.  This saddle-point value of the action will dictate the decay rate of the eigenfunction as defined on $\mathcal{G}_F$.  The statement that was just derived  roughly states that the sum of $\psi_\alpha^2 e^{2S_\alpha}$ over the classically allowed region $\overline{U}$ is greater than some constant times the sum of $\psi_\alpha^2 e^{2S_\alpha}$ over the classically disallowed region $U$.  Since $S_\alpha = 0$ for $\alpha \in \overline{U}$, the sum of the eigenfunctions in $U$ multiplied by a term exponential in the action $S$ is less than the sum of eigenfunctions in $\overline{U}$ without this exponential amplification.  Thus, in a sense, the value of an eigenfunction in $U$ must be exponentially small for its cumulative sum in $U$ to be small.  This intuition is made explicit in the following Theorem.

\begin{theorem}
Define $\overline{u} = \max_{\beta\in \overline{U}} u_\beta$.  For all nodes $\alpha$ in the classically forbidden region, an eigenstate with energy $E'$ satisfies
\begin{equation}
    |\psi_\alpha| \leq \sqrt{\frac{E' \overline{u}u_\alpha-u_\alpha}{\overline{u} - E' \overline{u}u_\alpha}} \left(\frac{e^{-S_\alpha}}{\sqrt{\epsilon}}\right),
\end{equation}
with $S_\alpha$ defined in Equation \ref{eq:action} and $1>\epsilon > 0$ a positive constant.
\end{theorem}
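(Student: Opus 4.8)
The plan is to start from the refined energy inequality established in the previous Lemma,
\begin{equation*}
\sum_\alpha \psi_\alpha^2 e^{2S_\alpha}\left(\frac{1}{u_\alpha}-E'\right) \leq (1-\epsilon) \sum_\alpha \psi_\alpha^2 e^{2S_\alpha}\left(\frac{1}{u_\alpha}-E'\right)_+,
\end{equation*}
and to convert it into a pointwise bound on a single $|\psi_\alpha|$ by isolating the contribution of the target node. The first step is to decompose each summand via $x = (x)_+ - (x)_-$ with $(x)_- = \max(-x,0)$, which recasts the inequality as
\begin{equation*}
\epsilon \sum_{\alpha} \psi_\alpha^2 e^{2S_\alpha}\left(\frac{1}{u_\alpha}-E'\right)_+ \leq \sum_{\alpha} \psi_\alpha^2 e^{2S_\alpha}\left(\frac{1}{u_\alpha}-E'\right)_-.
\end{equation*}
Because $(\cdot)_+$ is supported on the forbidden region $U$ while $(\cdot)_-$ is supported on the allowed region $\overline{U}$, and because $S_\alpha = 0$ on $\overline{U}$ by the construction of the Agmon metric in Equation~\ref{eq:action}, this separates cleanly into a forbidden-region left side and an allowed-region right side in which the exponential weights on the right collapse to unity.

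Next I would bound the two sides independently. On the left every summand is non-negative, since $1/u_\alpha - E' \geq 0$ on $U$, so I can retain only the term for the chosen node $\alpha$ and discard the rest, obtaining $\epsilon\,\psi_\alpha^2 e^{2S_\alpha}(1/u_\alpha - E')$ as a lower bound for the full sum. On the right I use $u_\beta \leq \overline{u}$ for $\beta \in \overline{U}$ to replace $1/u_\beta$ by $1/\overline{u}$, yielding the uniform bound $(E' - 1/\overline{u})\sum_{\beta \in \overline{U}} \psi_\beta^2$, and then invoke normalization $\sum_\beta \psi_\beta^2 = 1$ to drop the remaining sum. Chaining these gives $\epsilon\,\psi_\alpha^2 e^{2S_\alpha}(1/u_\alpha - E') \leq E' - 1/\overline{u}$.

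The final step is to solve this for $\psi_\alpha^2$, giving $\psi_\alpha^2 \leq (E' - 1/\overline{u})\,e^{-2S_\alpha}/[\epsilon\,(1/u_\alpha - E')]$, and then to clear denominators by multiplying numerator and denominator of the prefactor by $\overline{u}\,u_\alpha$, so that it reads $(E'\overline{u}u_\alpha - u_\alpha)/(\overline{u} - E'\overline{u}u_\alpha)$; taking a square root reproduces the stated bound. I expect the main obstacle to be not the algebra but the bookkeeping of positivity: one must check that both the numerator $E' - 1/\overline{u}$ and the denominator $1/u_\alpha - E'$ are strictly positive so that the division and square root are legitimate. The former follows because every $\beta \in \overline{U}$ satisfies $u_\beta > 1/E'$, hence $\overline{u} > 1/E'$; the latter is exactly the condition that $\alpha$ lies strictly in the forbidden region. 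I would also flag that the argument tacitly assumes $\psi$ is normalized, and this should be stated explicitly alongside the theorem.
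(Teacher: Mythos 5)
Your proposal is correct and follows essentially the same route as the paper's own proof: the $(x)_+ - (x)_-$ decomposition is just a repackaging of the paper's split of the sum over $U$ and $\overline{U}$, and the subsequent steps (keeping only the single term for node $\alpha$ on the forbidden side, bounding the allowed side by $E' - 1/\overline{u}$ via normalization, then clearing denominators) match the paper's argument exactly. Your explicit attention to the positivity of $E'-1/\overline{u}$ and $1/u_\alpha - E'$ and to the tacit normalization $\sum_\beta \psi_\beta^2 = 1$ is a welcome addition, though it does not change the substance.
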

\begin{proof}
Based on the result in the previous Lemma, we may split the sum over nodes in $U$ and nodes in $\overline{U}$.  This gives
\begin{equation}
    \sum_{\alpha\in \overline{U}} \psi_\alpha^2 e^{2S_\alpha}\left(\frac{1}{u_\alpha}-E'\right) + \epsilon \sum_{\alpha\in U} \psi_\alpha^2 e^{2S_\alpha}\left(\frac{1}{u_\alpha}-E'\right) \leq 0.
\end{equation}
Since $S_\alpha = 0$ for $\alpha \in \overline{U}$, we obtain the following chain of inequalities:
\begin{align}\label{eq:ineqchain}
     \epsilon \sum_{\alpha\in U} \psi_\alpha^2 e^{2S_\alpha}\left(\frac{1}{u_\alpha}-E'\right) &\leq \sum_{\alpha\in \overline{U}} \psi_\alpha^2 \left(E'-\frac{1}{u_\alpha}\right),\nonumber\\
     &\leq \left(\sum_{\alpha\in \overline{U}} \psi_\alpha^2\right) \max_{\alpha \in \overline{U}} \left(E'-\frac{1}{u_\alpha}\right),\nonumber\\
     &\leq \max_{\alpha \in \overline{U}} \left(E'-\frac{1}{u_\alpha}\right).
\end{align}
Each term in the sum on the left hand side of the above equation must be strictly less than or equal to the right hand side.  Call $\overline{u} = \max_{\beta\in \overline{U}} u_\beta$.  This results in the bound
\begin{equation}
    |\psi_\alpha| \leq \sqrt{\frac{E'-\frac{1}{\overline{u}}}{\frac{1}{u_\alpha} - E'}} \left(\frac{e^{-S_\alpha}}{\sqrt{\epsilon}}\right) = \sqrt{\frac{E'\overline{u}u_\alpha-u_\alpha}{\overline{u} - E' \overline{u}u_\alpha}} \left(\frac{e^{-S_\alpha}}{\sqrt{\epsilon}}\right),
\end{equation}
where $\alpha \in U$.
\end{proof}
Therefore, the eigenfunction decays exponentially in the minimum value of the action.  The coefficient before the exponential has an intuitive meaning: it is the square root of the ratio of the height of the deepest classical well to the depth of penetration through the classically forbidden region, measured with respect to the many-body energy $E'$.  For instance, if the height to the deepest classical well is small, which occurs when the eigenstate is towards the bottom of the spectrum, then the coefficient scales as a constant, and decreases as the penetration depth into the classically forbidden region increases.  In the standard WKB formalism, one finds the following semi-classical decay for the wavefunction:
\begin{equation}
    \psi(x) \propto \frac{1}{\sqrt{p(x)}}  \text{exp}\left(-\int p(x)\,dx\right).
\end{equation}
Here, we find a similar expression, assuming that $p$ denotes a ``graph momentum,'' equal to $p_\alpha \sim \sqrt{(1/u_\alpha - E')/|N(\alpha)|}$.  The quantity within the exponential is a sum of Lagrangian terms $\mathcal{L} \sim \log(1 + p)$ along a path; this approximately reduces to a discrete WKB formula when the ``graph momentum'' is small.

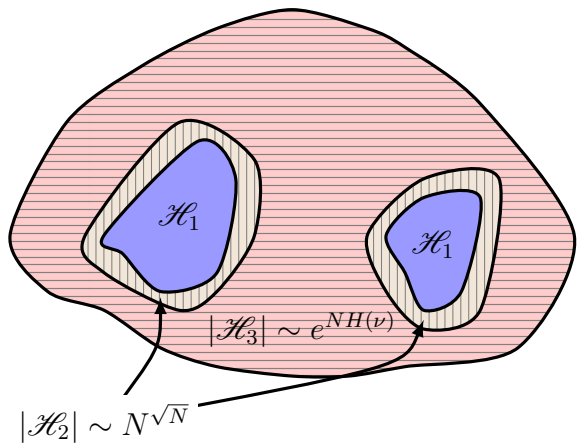
\begin{figure}
    \centering
    \begin{tikzpicture}[scale=2.5]
    \draw[very thick, pattern=horizontal lines, pattern color=black!50, preaction={fill=red!20}] plot[smooth cycle] coordinates {(-1.5,0) (-1.2,.6) (-.8,.9) (-.4,1.1) (0,1.2) (.4,1.1) (.6,1.0) (1,.7) (1.5,0) (1.2,-.6) (.6,-.7) (.3,-.75) (0,-.75) (-.4,-.7) (-.7,-.6) (-1,-.4) (-1.3,-.3)};
    \draw[very thick, pattern=vertical lines, pattern color=black!50, preaction={fill=brown!20}] plot[smooth cycle] coordinates {(-1.1,0) (-.6,.6) (-.2,.4) (-.2,0) (-.4,-.3) (-.6,-.4) (-1,-.2)};
    \draw[very thick, fill=blue!40] plot[smooth cycle] coordinates {(-1,0) (-.5,.5) (-.3,.3) (-.4,-.2) (-.7,-.3) (-.9,-.1)};
    
    \draw[very thick, pattern=vertical lines, pattern color=black!50, preaction={fill=brown!20}] plot[smooth cycle] coordinates {(.4,0) (.7,.3) (1.1,.3) (1,-.4) (.7,-.5) (.5,-.3)};
    \draw[very thick, fill=blue!40] plot[smooth cycle] coordinates {(.5,0) (.7,.2) (1,.2) (.9,-.3) (.7,-.4) (.6,-.2)};
    
    \node[font=\bfseries] at (0.05,-.5) {\large $\left|\mathscr{H}_3\right| \sim e^{N H(\nu)}$};
    \node at (.75,-0.05) {\large $\mathscr{H}_1$};
    \node at (-.6,0.1) {\large $\mathscr{H}_1$};
    \draw[very thick, ->,>=latex] (-1,-1) ..controls (-.7,-.6).. (-.7,-.33);
    \draw[very thick, ->,>=latex] (-1,-1) node[fill=white] {\large $\left|\mathscr{H}_2\right| \sim N^{\sqrt{N}}$} ..controls (.6,-.7).. (.7,-.45);
    
    \end{tikzpicture}
    \caption{The break-up of the full Hilbert space into the three regions described in the text.  One can visualize the Fock-space graph $\mathcal{G}_F$ as embedded into this Hilbert space.  The blue region (no texture) is the classically allowed area formed from a fixed many-body energy.  As shown, for sufficiently small energies, the classical regions will form islands in the full Hilbert space.  In the brown region (vertical lines), eigenstates decay, but at a rate does not increase with increasing system size.  The number of states in these regions is of order $N^{N^{1/2}}$.  In the red region (horizontal lines), the eigenstates decay at a rate scaling with an increasing function of system size; eigenstates decay exponentially for a constant fraction of the red region.}
    \label{fig:hilbspace}
\end{figure}

Next, we would like to understand the nature of the exponential decay in the action.  Each Lagrangian term scales like $\sim \sqrt{\left(1/u_\alpha-E'\right)/|N(\alpha)|}$, assuming that $|N(\alpha)|$ scales with system size in Equation \ref{eq:lagchoice}. Thus, the decay rate is crucially dictated by the size of $N(\alpha)$.  

For simplicity, let us assume $\mathcal{G}_P$ is a one-dimensional chain with nearest neighbor hopping.  Then, $|N(\alpha)|$ has the following behavior:
\begin{equation}\label{eq:degbound}
    \Theta(1) < |N(\alpha)| < \Theta(N).
\end{equation}
For states with a large domain of occupied sites, $|N(\alpha)|$ will scale like a constant.  In the worst case, if occupied and filled sites alternate, $|N(\alpha)|$ will scale linearly in $N$.  Thus, in the worst case, the action may only increase by $O(1/\sqrt{N})$ as one hops farther out in $\mathcal{G}_F$. 

Suppose we deal with these worst case states where $|N(\alpha)| \propto N$.  Further fix the many body energy $E'$, and denote $\mathscr{H}_1$ to the space of states living in the classically allowed regions.  As we hop along $\mathcal{G}_F$ and into the classically forbidden region the barrier height is at least a constant (we make an implicit assumption that we are considering typical paths in which the probability of encountering a node with barrier height scaling inversely with system size is small).  Suppose the minimizing path in Equation \ref{eq:action} to some node has length $n$ into the classically disallowed region; then, the action will scale like
\begin{equation}
    S \gtrsim \sum_{i=1}^n \log\left(1 + \sqrt{\frac{k_i}{N}}\right) \sim \frac{n}{N^{1/2}},
\end{equation}
where $k_i$ are unimportant constants.  Then, if $n = \Omega(N^{1/2})$, the action will begin to scale as a function of the system size.  Note that this bound is weak because the Lagrangian depends on $1/u-E'$, which increases along the minimizing path.  When this energy difference scales as $\Theta(N)$, we obtain exponential decay, which conceivably occurs for a fraction of the Hilbert space.  

With this intuition, we define three regions.  Each region covers nodes of the Fock space graph $\mathcal{G}_F$, but should be interpreted as the Hilbert space (i.e. the span of states associated with nodes in the region), denoted by $\mathscr{H}$:
\begin{itemize}
    \item $\mathscr{H}_1$: Fixing a value of the many-body energy $E'$, this region covers classically allowed regions of $\mathcal{G}_F$ where the wavefunction behaves like a constant.
    
    \item $\mathscr{H}_2$: This region is an annulus of thickness $N^{1/2}$ around $\mathscr{H}_1$ regions.  It corresponds to the intermediate region in which there exist portions that do not decay at a rate scaling with the system size.  In the thermodynamic limit, we will not observe decay in these portions.  However, there may exist several portions of $\mathscr{H}_2$ where we do find decay scaling with system size.
    
    \item $\mathscr{H}_3$: This region is outside of $\mathscr{H}_1$ and $\mathscr{H}_2$, in which eigenstates almost surely decay at a rate that scales with some increasing function in the system size.  For instance, if $n = \Theta(N^{1/2}\log N)$, then the eigenstates obey a power-law decay.  Eigenstates will decay exponentially when $1/u-E' = \Theta(N)$.
\end{itemize}


These regions are illustrated in Figure \ref{fig:hilbspace}.  In addition, since the degree of each node is at most $N$, the dimension $\mathscr{H}_2$ is roughly $N^{\sqrt{N}}$, which is of zero-measure with respect to the dimension of the full Hilbert space, roughly equal to $e^{N H(\nu)}$ (here, we make the assumption that $N$ is a fraction of $L$, so that $\binom{L}{N} \sim e^{N H(\nu)}$ where $\nu$ is the filling fraction and $H(\nu) = -\nu \log \nu - (1-\nu)\log(1-\nu)$).  For $\mathscr{H}_1$ regions centered about states with degree that is $o(N)$, the estimates derived above can be refined by substituting $N$ with the degree.  Thus, based on the bound in Equation \ref{eq:degbound}, there can exist some regions of the full Hilbert space that exhibit strong exponential localization.

It is worth noting what changes as the underlying dimension increases.  For instance, if $\mathcal{G}_P$ is a $d$-dimensional lattice, then the topology of $\mathcal{G}_F$ becomes far more connected.  In particular, the degree distribution follows
\begin{equation}
    \Theta\left(N^{1-\frac{1}{d}}\right) < |N(\alpha)| < \Theta(N),
\end{equation}
with minimum degree nodes having the property that linearly-sized domains are occupied, and maximum degree nodes having a large number of constant-size domains.  Note that the minimum degree corresponds to an area law, while the maximum degree corresponds to a volume law.  The bounds on decay rate developed in the preceding paragraph follow immediately by substituting $N$ with the degree. The higher connectivity present in these graphs could point to the existence of a mobility edge in higher dimensions at least in the Anderson model, as will be further elaborated in Section V.

\subsection{Predicting Regions of Localization}

In the previous section, we have addressed the question of what occurs in classically forbidden regions. Now, we explore how the eigenstates behave in the classically-allowed regions.  In the single-particle case, it was argued that so long as these regions are far apart and not resonant, the eigenstates (towards the bottom of the spectrum) tend to localize at only one of these regions, and have exponentially small amplitudes on the others \cite{potential-2}.  Having a similar result in the many-body case would support any assertion of MBL.

To proceed, it is first important for us to precisely define the regions of interest.  As before, denote $\overline{U}$ the set of classically allowed regions which satisfy $1/u_\alpha - E' \leq 0$.  These regions/wells form sets of disjoint connected regions in $\mathcal{G}_F$, which we label by the index $\ell$.  Let us fix a particular such well $\overline{U}(\ell)$; compute
\begin{equation}
    \overline{S} = \inf \{S(\alpha, \beta): \alpha \in \overline{U}(\ell), \beta \in \overline{U}(\ell')\},
\end{equation}
for $\ell' \neq \ell$.  The notation $S(\alpha, \beta)$ denotes the action of a path from node $\alpha$ to $\beta$.  Essentially, this quantity measures the minimum value of the action along any path connecting well $\ell$ to any other well.  Call $W(\ell)$ the set of nodes $\alpha$ such that $\inf \{S(\alpha, \beta): \beta \in U(\ell)\} \leq \overline{S}/2$.  This defines a neighborhood around well $\ell$ consisting of points for which $\ell$ is the closest such well.  Repeating for all other wells, it is clear that $\bigcup_\ell W(\ell)$ is the set of all nodes in $\mathcal{G}_F$ and $W(\ell)$ is disjoint from $W(\ell')$ (we may arbitrarily assign tie-breaker nodes shared by $W(\ell)$ and $W(\ell')$ at the boundary to assure disjointness). 

In this section, we will substantially make use of the test state $\ket{\zeta} = \sum_{a \in A} \gamma_a \ket{\psi^{(a)}}$, which is written in the eigenbasis (and we enumerate eigenstates using the Latin alphabet), where $A$ is specified in the theorems stated below.  Further define $\ket{\phi^{(\ell)}}$ to be the eigenstates of the eigenvalue problem restricted to nodes in $W(\ell)$ (with zero boundary conditions on all other nodes).  For the remainder of the theorem below, we drop the label $\ell$ in $\ket{\phi^{(\ell)}}$.  
We will first need the following Lemma in order to proceed:
\begin{lemma}
Consider the subgraph induced by the nodes in $W(\ell)$, which we call $\mathcal{G}_W$, and solve the eigenvalue problem on $\mathcal{G}_W$ with zero boundary conditions applied to all nodes not in $W(\ell)$.  Call an eigenstate of this problem $\ket{\psi^{(W)}}$ for which the energy of such a state is $E'_W$ upper bounded by $E'$.  Call $u$ the landscape for the original eigenvalue problem on the entirety of $\mathcal{G}_F$, and $u^{(W)}$ the landscape for the eigenvalue problem on the subgraph induced by $W(\ell)$.  Then, 
\begin{equation}
    |\psi^{(W)}_\alpha| \leq \sqrt{\frac{E_W'-\frac{1}{\overline{u}^{(W)}}}{\frac{1}{u^{(W)}_{\alpha}} - E_W'}} \left(\frac{e^{-S_\alpha}}{\sqrt{\epsilon}}\right) \leq O\left(\sqrt{\frac{N}{\sigma}}\right) e^{-S_\alpha},
\end{equation}
for some $\epsilon > 0$ and $S_\alpha$ the action of the eigenvalue problem on all of $\mathcal{G}_F$ with energy $E'$.  The notation $\overline{u}^{(W)}$ indicates the maximum value of $u^{(W)}$ in classically allowed regions of $W(\ell)$ where $1/u^{(W)}-E_W' < 0$.  Furthermore, the second inequality follows from $E_W'-\frac{1}{\overline{u}^{(W)}} = O(N)$ and $\frac{1}{u^{(W)}_{\alpha}} - E_W' > \sigma$ in classically disallowed regions for some $\sigma > 0$.  
\end{lemma}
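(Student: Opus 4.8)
The plan is to recognize that the restricted eigenvalue problem on $\mathcal{G}_W$, with zero Dirichlet data on every node outside $W(\ell)$, is \emph{itself} an instance of exactly the setup treated in Section~IV: it carries the same diagonal potential $V_\alpha$, its own Dirichlet augmentation, and hence its own landscape $u^{(W)}$ solving $(\mathcal{H}_W+K)\ket{u^{(W)}}=\ket{1}$ on $\mathcal{G}_W$. Crucially, I would use the \emph{same} shift $K$ as the global problem: since passing to the subgraph only deletes edges, $|N^{(W)}(k)|\le|N(k)|$, so the global $K$ in Eq.~\eqref{eq:Kdef} is at least as large as the one the subproblem requires and therefore still makes $(\mathcal{H}_W+K)$ satisfy Lemmas~1--3 (positive landscape, entrywise-positive resolvent, positive shifted spectrum). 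Applying the Agmon decay Theorem verbatim to this subproblem yields
\begin{equation}
  |\psi^{(W)}_\alpha| \leq \sqrt{\frac{E_W'-\tfrac{1}{\overline{u}^{(W)}}}{\tfrac{1}{u^{(W)}_\alpha}-E_W'}}\,\frac{e^{-S^{(W)}_\alpha}}{\sqrt{\epsilon}},
\end{equation}
where $S^{(W)}_\alpha$ is the action built from the restricted Lagrangian, infimized over paths confined to $\mathcal{G}_W$ and anchored in $\overline{U}^{(W)}=\{k\in W:\,1/u^{(W)}_k-E_W'<0\}$. This already reproduces the prefactor of the first claimed inequality; the only gap is that the statement uses the \emph{global} action $S_\alpha$ defined on all of $\mathcal{G}_F$ at energy $E'$.

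The core step is therefore to prove $S^{(W)}_\alpha\ge S_\alpha$, so that $e^{-S^{(W)}_\alpha}\le e^{-S_\alpha}$ and the bound may be weakened to the global action as stated. I would assemble this from three monotonicities. First, a discrete domain-monotonicity (maximum-principle) argument gives $u^{(W)}_\alpha\le u_\alpha$: extending $u^{(W)}$ by zero and setting $w=u-u^{(W)}$, one checks $(\mathcal{H}+K)w=0$ on $W$ and $(\mathcal{H}+K)w\ge 1>0$ outside $W$ (the hopping contribution from $u^{(W)}$ is nonpositive since $t,u^{(W)}\ge 0$), whence $w=(\mathcal{H}+K)^{-1}[(\mathcal{H}+K)w]\ge 0$ by the entrywise positivity of Lemma~2. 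Second, $|N^{(W)}(k)|\le|N(k)|$. Third, combining $1/u^{(W)}_k\ge 1/u_k$ with the hypothesis $E_W'\le E'$ gives $1/u^{(W)}_k-E_W'\ge 1/u_k-E'$. Since each restricted Lagrangian weight is increasing in both $(1/u_k-E')_+$ and $1/|N(k)|$, it dominates its global counterpart edge-by-edge; and the same two inequalities force $\overline{U}^{(W)}\subseteq\overline{U}\cap W$, so the restricted infimum runs over a \emph{subset} of the globally admissible paths. A larger per-edge cost infimized over fewer paths yields $S^{(W)}_\alpha\ge S_\alpha$, exactly as required.

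With the first inequality established, the second is a matter of inserting the stated scaling estimates. In the classically forbidden region the denominator obeys $1/u^{(W)}_\alpha-E_W'>\sigma$ by the definition of a strictly positive gap $\sigma$, while the numerator is controlled by the span of the diagonal potential: because $V_\alpha=\sum_{\langle i,j\rangle}Vn_i^{(\alpha)}n_j^{(\alpha)}+\sum_i\epsilon_i n_i^{(\alpha)}$ is a sum of $O(N)$ bounded terms, both $E_W'$ and $1/\overline{u}^{(W)}$ sit inside a window of width $O(N)$, so $E_W'-1/\overline{u}^{(W)}=O(N)$. The prefactor is then $\sqrt{O(N)/\sigma}$, and with $\epsilon$ a fixed positive constant the entire bound collapses to $O(\sqrt{N/\sigma})\,e^{-S_\alpha}$.

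I expect the main obstacle to be the landscape-comparison claim $u^{(W)}\le u$ and the bookkeeping it entails: one must confirm that the common choice of $K$ really does preserve the hypotheses of Lemmas~1--3 on the subgraph, and that zero-extension of $u^{(W)}$ produces the sign $(\mathcal{H}+K)w\ge 0$ uniformly. A related subtlety, which should be flagged rather than buried, is that the bound is only meaningful if the well survives the restriction, i.e.\ $\overline{U}^{(W)}$ is nonempty (so that $\overline{u}^{(W)}$ is well defined); since $u^{(W)}\le u$ shrinks the allowed set, this must be ensured by the hypothesis $E_W'\le E'$ together with $\ell$ lying well inside $W(\ell)$. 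Finally, for the thermodynamic-limit argument that follows, the $O(N)$ estimate on the energy span must be made uniform over $W(\ell)$, which again reduces to the boundedness of $V_\alpha$ but deserves an explicit statement.
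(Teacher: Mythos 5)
Your proposal is correct and takes essentially the same route as the paper: the core step in both is the domain-monotonicity $u^{(W)}_\alpha \le u_\alpha$ obtained from the positivity results of Section III (the paper subtracts the two landscape equations on $W(\ell)$ and invokes Lemma~1, while you zero-extend $u^{(W)}$ and invoke the entrywise-positive inverse of Lemma~2 --- the same maximum-principle idea), followed by $1/u^{(W)}_\alpha - E'_W \ge 1/u_\alpha - E'$, dominance of the restricted action over the global $S_\alpha$, and application of the Agmon theorem to the restricted problem with the stated $O(N)$ and $\sigma$ scalings. Your write-up is in fact more explicit than the paper's terse proof about the path-subset/Lagrangian-dominance bookkeeping, the role of $|N_W(k)|\le |N(k)|$, and the fact that the global shift $K$ still satisfies the hypotheses of Lemmas~1--3 on the subgraph.
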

\begin{proof}
To prove this statement, it suffices to show that $1/u^{(W)} - E_W'$ for the subgraph problem is greater than $1/u - E'$ for the entire problem; therefore, the action for the subgraph problem will be larger than that for the entire problem.  First, we know that
\begin{equation}
    -t \sum_{\beta \in N(\alpha)} u_{\beta} + \left(\sum_{i = 1}^{L} V n_i^{(\alpha)} n_{i+1}^{(\alpha)} + \epsilon_i n_i^{(\alpha)}+K \right)u_\alpha = 1
\end{equation}
for the landscape on the entire domain, and 
\begin{equation}
    -t \sum_{\beta \in N_W(\alpha)} u^{(W)}_{\beta} + \left(\sum_{i = 1}^{L} V n_i^{(\alpha)} n_{i+1}^{(\alpha)} + \epsilon_i n_i^{(\alpha)}+K \right)u^{(W)}_\alpha = 1,
\end{equation}
where $N_W(\alpha)$ denotes the set of neighbors of $\alpha$ in the subgraph induced by the well nodes.  Note that $N_W(\alpha)$ is a subset of $N(\alpha)$.  Subtracting the two equations above for all $\alpha \in W(\ell)$ gives
\begin{align}
    -t \sum_{\beta \in N_W(\alpha)} &\left(u_{\beta}-u^{(W)}_{\beta}\right) + \left(\sum_{i = 1}^{L} V n_i^{(\alpha)} n_{i+1}^{(\alpha)} + \epsilon_i n_i^{(\alpha)}+K \right)\nonumber \\ &\times \left(u_{\alpha}-u^{(W)}_{\alpha}\right) = t \sum_{\beta \in N(\alpha)\setminus N_W(\alpha)} u_{\beta}.
\end{align}
From Lemma III.1, the RHS of this equation is positive, and by the same Lemma, this implies that $\left(u_{\alpha}-u^{(W)}_{\alpha}\right) \geq 0$.  Therefore, $1/u_{\alpha} \leq 1/u^{(W)}_{\alpha}$ and by the assumption that $E'_W \leq E'$, we have that $1/u_W - E_W' \geq 1/u - E'$.  Thus, utilizing the bounds presented in the previous subsection along with this result, we prove this Lemma.
\end{proof}

We need to make note of some additional notation.  We call $\mathcal{G}_W$ the subgraph induced by nodes in the well, and we call $\partial \mathcal{G}_W$ the set of edges connecting nodes in $W(\ell)$ to nodes outside of $W(\ell)$.  With this, we may now state the following locality theorem:
\begin{theorem}
{\bf (Locality theorem I):} Suppose $\ket{\phi}$ is an eigenstate with eigenvalue $\mu$ of the eigenvalue problem on the subgraph induced by nodes in $W(\ell)$.  Denote by $P_\psi(\mu-\delta, \mu+\delta)$ the projection operator onto the basis of eigenstates (in the full graph) with eigenvalues between $\mu-\delta$ and $\mu+\delta$; that is
\begin{equation}
    P_\psi(\mu-\delta, \mu+\delta) = \sum_{a \in \overline{A}} \ket{\psi^{(a)}}\bra{\psi^{(a)}},
\end{equation}
where $A$ is the set of eigenstates with eigenvalues $\lambda_a$ satisfying $\abs{\lambda_a - \mu} \geq \delta$.  Then,
\begin{equation}
\norm{(I-P_\psi(\mu-\delta, \mu+\delta))\ket{\phi}}_2 \leq O\left(\sqrt{\frac{N}{\sigma}}\right) \frac{N}{\delta} e^{-\overline{S}/2},
\end{equation}
where $\sigma$ is defined in the previous Lemma.
\end{theorem}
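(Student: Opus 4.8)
The plan is to show that $\ket{\phi}$, once extended by zero from $W(\ell)$ to all of $\mathcal{G}_F$, is an approximate eigenstate of the full Hamiltonian whose residual is supported only on the boundary $\partial\mathcal{G}_W$ and is exponentially small by the Agmon estimates. Concretely, let $\tilde\phi$ be the extension with $\tilde\phi_\alpha = \phi_\alpha$ for $\alpha\in W(\ell)$ and $\tilde\phi_\alpha = 0$ otherwise, and set $r = (\mathcal{H}+K-\mu)\tilde\phi$, working throughout in the shifted convention $E' = E+K$ (the shift is common to $\mathcal{H}+K$ and $\mu$ and cancels in every difference below). Expanding $\tilde\phi = \sum_a c_a \ket{\psi^{(a)}}$ in the full eigenbasis gives $c_a(\lambda_a-\mu) = \langle \psi^{(a)}|r\rangle$, so for each $a\in A$ (those with $|\lambda_a-\mu|\ge\delta$) one has $|c_a|\le |\langle\psi^{(a)}|r\rangle|/\delta$. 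Summing squares over exactly this index set yields $\|(I-P_\psi)\ket{\phi}\|_2 \le \|r\|_2/\delta$, reducing the theorem to the estimate $\|r\|_2 \le O(\sqrt{N/\sigma})\,N\,e^{-\overline{S}/2}$.

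First I would pin down the support and the form of $r$. Since $\mathcal{H}+K$ differs from its $W$-restriction only through the hopping across $\partial\mathcal{G}_W$, for $\alpha\in W(\ell)$ the cross-boundary hopping hits the vanishing entries of $\tilde\phi$, so $r_\alpha=0$; and for $\alpha\notin W(\ell)$ the diagonal term multiplies $\tilde\phi_\alpha=0$, leaving $r_\alpha = -t\sum_{\beta\in N(\alpha)\cap W(\ell)}\phi_\beta$. Thus $r$ lives on the outer endpoints of the boundary edges, each entry being a hopping-weighted sum of eigenfunction values on the inner boundary of $W(\ell)$. Applying Cauchy--Schwarz to each entry and using that every node of $\mathcal{G}_F$ has degree at most $N$, I would obtain $\|r\|_2^2 \le t^2 N^2 \sum_{\beta\in\partial W}\phi_\beta^2$, the sum running over inner-boundary nodes.

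It remains to control $\sum_{\beta\in\partial W}\phi_\beta^2$. The crucial input is the integrated Agmon estimate applied to the well problem: exactly as in the chain of inequalities in the proof of Theorem III.2, the refined energy inequality together with $1/u^{(W)}_\alpha-\mu\ge\sigma$ in the forbidden region gives $\sum_{\alpha\in U}\phi_\alpha^2 e^{2S_\alpha} = O(N/\sigma)$, with $S_\alpha$ the action measured from $\overline{U}(\ell)$ (here one invokes the preceding Lemma, comparing $u^{(W)}$ with $u$, which needs $\mu\le E'$). Since every inner-boundary node sits within one edge of action-distance $\overline{S}/2$, the bounded Lagrangian weights force $S_\beta \ge \overline{S}/2 - O(1)$, hence $e^{-2S_\beta}\le O(1)\,e^{-\overline{S}}$ and $\sum_{\beta\in\partial W}\phi_\beta^2 \le O(1)\,e^{-\overline{S}}\sum_{\alpha\in U}\phi_\alpha^2 e^{2S_\alpha} = O(N/\sigma)\,e^{-\overline{S}}$. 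Substituting back gives $\|r\|_2 \le O(\sqrt{N/\sigma})\,N\,e^{-\overline{S}/2}$, and the factor $1/\delta$ from the spectral step delivers the claimed bound.

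The main obstacle I anticipate is that $\partial\mathcal{G}_W$ can contain exponentially many nodes in the Fock space, so bounding $\sum_{\beta\in\partial W}\phi_\beta^2$ by (number of boundary nodes) $\times$ (pointwise Agmon bound) would wreck the polynomial prefactor. The essential device is therefore to use the \emph{summed} Agmon estimate, in which the exponential weight $e^{2S_\alpha}$ performs the counting internally, rather than the pointwise decay bound of Theorem III.2. A secondary technical point is controlling the one-edge slack between $S_\beta$ and $\overline{S}/2$ on the inner boundary, which requires the individual Lagrangian weights $|\mathcal{L}_{\alpha,\beta}| = O(1)$; this holds in the regime $1/u-E' = O(N)$ with degree $\sim N$ but should be recorded as a standing hypothesis.
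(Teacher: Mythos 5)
Your proposal is correct, and it reaches the theorem by a genuinely different route than the paper. You extend $\ket{\phi}$ by zero, observe that the residual $\ket{r}=(\mathcal{H}+K-\mu)\ket{\tilde{\phi}}$ vanishes identically on $W(\ell)$ and equals a hopping-weighted sum of inner-boundary values on the outer endpoints of boundary edges, and then run the standard approximate-eigenvector argument (Parseval for the real symmetric matrix $\mathcal{H}+K$) to get $\norm{(I-P_\psi(\mu-\delta,\mu+\delta))\ket{\phi}}_2\leq \norm{r}_2/\delta$, with a degree-counting Cauchy--Schwarz converting $\norm{r}_2^2$ into $O(N^2)\sum_{\beta\in\partial W}\phi_\beta^2$. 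The paper never forms the residual this explicitly: it works in the landscape-weighted form $M$, introduces a cutoff $\eta$ vanishing on $\partial\mathcal{G}_W$ and a test state $\ket{\zeta}=\sum_{a\in A}\gamma_a\ket{\psi^{(a)}}$, reduces $\bra{\zeta}\ket{r}$ to a sum over edges where $\eta$ jumps, bounds it by Cauchy--Schwarz with the matrix $P_{\alpha\beta}=\eta_\beta-\eta_\alpha$ and the H\"older estimate $\sigma_{\text{max}}(P)\leq kN$, extracts $\sum_{a\in A}\nu_a^2$ through the sign choice $\gamma_a=\nu_a\,\text{sgn}(\lambda_a-\mu)$, and finally must bound $\ket{(1-\eta)\phi}$ separately and combine via the triangle inequality. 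Both arguments rest on the identical key input, which you correctly isolate: the preceding Lemma (the comparison of $u^{(W)}$ with $u$, requiring $\mu\leq E'$) together with the integrated Agmon estimate from the proof of Theorem III.2, giving $\sum_{\beta\in\partial W}\phi_\beta^2\leq O(N/\sigma)\,e^{-\overline{S}}$; and both share the one-edge-slack caveat (bounded Lagrangian weights near $\partial W$) that the paper buries in its ``sufficiently small constant $\delta'$'' and you state as a standing hypothesis. What your route buys: it is shorter, dispenses with the cutoff bookkeeping, the singular-value estimate, and the final triangle-inequality step, and it exploits a genuinely discrete feature --- on a graph the zero-extension of a Dirichlet eigenvector has an honest $\ell^2$ residual, whereas in the continuum setting on which the paper's proof is modeled the analogous residual is a surface distribution, which is what forces the test-function machinery there. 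What the paper's machinery buys is uniformity: essentially the same computation is re-run with the roles of $\phi$ and $\psi$ exchanged to prove Locality Theorem II; your argument adapts there as well (the well eigenstates over all $\ell$ form a complete orthonormal basis, since the $W(\ell)$ are disjoint and exhaust $\mathcal{G}_F$), but that transfer would need to be written out.
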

\begin{proof}
Throughout this proof, we will be using labels such as $\ket{fg}$ to mean $\sum_\alpha f_\alpha g_\alpha \ket{\alpha}$.  These do not represent physical quantum states because they are not normalized.  

First recall that operators such as $M$ defined at the beginning of this section operate on the entirety of $\mathcal{G}_F$.  For this proof, we will assume that they only operate on $\mathcal{G}_W$.  Since $\ket{\phi}$ has eigenvalue $\mu$, we know that $M \ket{\phi} = \mu \ket{u^{(W)} \phi}$.  Let us instead consider the state $\ket{\eta \phi}$, for some $\eta$ to be presented momentarily.  Then, $M \ket{\eta \phi} = \mu \ket{u \eta \phi} + \ket{u r}$ for some residual state $\ket{r}$.  We note that $u$ here is the landscape defined for the global $\mathcal{G}_F$.  Applying the test state $\bra{\zeta/u}$ to this equation gives us the following expression for $\bra{\zeta/u}\ket{u r} = \bra{\zeta}\ket{r}$:
\begin{align}\label{eq:zeta}
    \bra{\zeta}\ket{r} = t\sum_{\langle \alpha, \beta \rangle \in \mathcal{G}_W}u_\alpha u_\beta&\left(\frac{\zeta_\alpha}{u_\alpha} -\frac{\zeta_\beta}{u_\beta}\right)\left(\frac{\phi_\alpha \eta_\alpha}{u_\alpha} - \frac{\phi_\beta \eta_\beta}{u_\beta}\right) \nonumber\\ &+ \sum_\alpha \zeta_\alpha \phi_\alpha \eta_\alpha\left(\frac{1}{u_\alpha} - \mu\right).
\end{align}
This expression may be written as
\begin{align}
    \bra{\zeta}\ket{r} =  t&\sum_{\langle \alpha, \beta \rangle \in \mathcal{G}_W \cup \partial \mathcal{G}_W}u_\alpha u_\beta\left(\frac{\zeta_\alpha}{u_\alpha} -\frac{\zeta_\beta}{u_\beta}\right)\left(\frac{\phi_\alpha \eta_\alpha}{u_\alpha} - \frac{\phi_\beta \eta_\beta}{u_\beta}\right) \nonumber\\ &- t\sum_{\langle \alpha, \beta \rangle \in \partial \mathcal{G}_W}u_\alpha u_\beta \left(\frac{\zeta_\alpha}{u_\alpha} -\frac{\zeta_\beta}{u_\beta}\right)\left(\frac{\phi_\alpha \eta_\alpha}{u_\alpha}- \frac{\phi_\beta \eta_\beta}{u_\beta}\right) \nonumber \\&+ \sum_\alpha \zeta_\alpha \phi_\alpha \eta_\alpha\left(\frac{1}{u_\alpha} - \mu\right).
\end{align}
Expanding $\ket{\zeta}$ via its definition into the basis of eigenstates in the set $A$ results in
\begin{alignat}{2}
    \bra{\zeta}\ket{r} =  t&\sum_{a \in A} \gamma_{a}\sum_{\langle \alpha, \beta \rangle \in \mathcal{G}_W \cup \partial \mathcal{G}_W}&&u_\alpha u_\beta\left(\frac{\psi^{(a)}_\alpha}{u_\alpha} -\frac{\psi^{(a)}_\beta}{u_\beta}\right)\nonumber\\& &&\times\left(\frac{\phi_\alpha \eta_\alpha}{u_\alpha} - \frac{\phi_\beta \eta_\beta}{u_\beta}\right) \nonumber\\ &+\sum_{a \in A} \gamma_{a} \sum_{\alpha \in W} \psi^{(a)}_\alpha \phi_\alpha \eta_\alpha&&\left(\frac{1}{u_\alpha}-\lambda_{a}\right) \nonumber\\ &- t\sum_{a \in A}\gamma_{a}\sum_{\langle \alpha, \beta \rangle \in \partial \mathcal{G}_W}&&u_\alpha u_\beta \left(\frac{\psi^{(a)}_\alpha}{u_\alpha} -\frac{\psi^{(a)}_\beta}{u_\beta}\right)\nonumber \\ & &&\times\left(\frac{\phi_\alpha \eta_\alpha}{u_\alpha}- \frac{\phi_\beta \eta_\beta}{u_\beta}\right) \nonumber \\&+ \sum_{a \in A} \gamma_{a} \sum_{\alpha \in W} \psi^{(a)}_\alpha \phi_\alpha \eta_\alpha &&\left(\lambda_{a} - \mu\right).
\end{alignat}
The first two terms can be combined to equal zero from the eigenvalue equation $M \ket{\psi_\alpha} = \lambda_\alpha \ket{u \psi_\alpha}$.  The addition of the edges $\partial \mathcal{G}_W$ in the summation was necessary so that the boundary nodes satisfy the appropriate eigenvalue equation for the entire domain.  The residual term (the third term) can be made to be zero by forcing $\eta = 0$ on all nodes in $\partial \mathcal{G}_W$.  Thus, with this choice of $\eta$, we are left with the fourth term:
\begin{equation}\label{eq:eigbasis}
    \bra{\zeta}\ket{r} = \sum_{a\in A} \gamma_a \sum_\alpha \psi^{(a)}_\alpha \phi_\alpha \eta_\alpha\left(\lambda_a - \mu\right),
\end{equation}
where the remaining terms in Equation \ref{eq:zeta} disappear from application of the eigenvalue equation for $\ket{\psi^{(a)}}$.  Denoting the inner product $\bra{\psi^{(a)}}\ket{\eta \phi} = \nu_a$, we find
\begin{equation}\label{eq:firstbound}
    \bra{\zeta}\ket{r} = \sum_{a\in A} \gamma_a \nu_a \left(\lambda_a - \mu\right).
\end{equation}
Next, we derive a bound on $\bra{\zeta}\ket{r}$.  We start by noting that the following identity is true:
\begin{align}\label{eq:eigidentity}
    \mel{\frac{\eta \zeta}{u}}{M}{\phi}& - \mu \bra{\eta \zeta}\ket{\phi} = t\sum_{\langle \alpha, \beta \rangle \in \mathcal{G}_W}u_\alpha u_\beta\left(\frac{\zeta_\alpha\eta_\alpha}{u_\alpha} -\frac{\zeta_\beta\eta_\beta}{u_\beta}\right)\nonumber\\ &\times \left(\frac{\phi_\alpha}{u_\alpha} - \frac{\phi_\beta}{u_\beta}\right) + \sum_{\alpha \in W} \zeta_\alpha \phi_\alpha \eta_\alpha\left(\frac{1}{u_\alpha} - \mu\right).
\end{align}
This expression in fact equals zero.  To show this, let us rewrite the eigenvalue problem for $\ket{\phi}$ in terms of the \emph{global} landscape $\ket{u}$.  Following the analysis of Section IV.A and replacing the local landscape by the global landscape, we find that Equation \ref{eq:landschromassaged} becomes
\begin{equation}
- t \sum_{\beta \in N(\alpha)} u_\beta u_\alpha \left(\frac{\phi_\beta}{u_\beta}-\frac{\phi_\alpha}{u_\alpha}\right) + \phi_\alpha = \mu u_\alpha \phi_\alpha,
\end{equation}
so long as $\alpha$ is located in the bulk of $W$ and not at the boundary $\partial W$.  Then, since $\eta_\alpha = 0$ at the boundary, the quantity $\mel{\frac{\eta \zeta}{u}}{M}{\phi} - \mu \bra{\eta \zeta}\ket{\phi}$ does not depend on the boundary behavior.  In the bulk, $M \ket{\phi} = \mu \ket{u \phi}$, so the quantity equals zero, as desired.

Subtracting Equation \ref{eq:eigidentity} from Equation \ref{eq:zeta} and simple algebraic manipulation gives the following expression:
\begin{equation}
    \bra{\zeta}\ket{r} = t\sum_{\langle \alpha, \beta \rangle} (\eta_\beta - \eta_\alpha)(\phi_\alpha \zeta_\beta - \phi_\beta \zeta_\alpha).
\end{equation}
Next, we choose $\eta_\alpha$ to equal $1$ if $\inf \{S(\alpha, \beta): \beta \in U(\ell)\} \leq \overline{S}/2 - \delta'$ and zero otherwise.  We also have the additional restriction that $\eta_\alpha = 0$ for nodes at the boundary $\partial W$. Therefore, the sum above has nonzero contribution only if the edge $\langle \alpha, \beta \rangle$ connects a node where $\eta = 0$ and a node where $\eta = 1$.  For $\delta'$ chosen small enough, the sum will only have contribution in the subgraph $\partial \mathcal{G}_W$.  We may then write
\begin{equation}
    \bra{\zeta}\ket{r} = t \mel{\zeta}{P}{\phi},
\end{equation}
where $P$ denotes a matrix with elements $P_{\alpha \beta} = \eta_\beta - \eta_\alpha$.  Next, we perform a restriction of both $\ket{\phi}$ and $\ket{\zeta}$ as well as $P$ to nodes $\alpha$ for which $P_{\alpha \beta} \neq 0$ for some $\beta$.  By the Cauchy-Schwartz inequality, we find
\begin{equation}\label{eq:cauchy}
    \abs{\bra{\zeta}\ket{r}}^2 \leq t^2 \sigma^2_{\text{max}}(P) \bra{\phi}\ket{\phi} \bra{\zeta}\ket{\zeta},
\end{equation}
where $\sigma_{\text{max}}(P)$ denotes the maximum singular value of $P$, or alternately the 2-norm.  First, we have $\bra{\zeta}\ket{\zeta} \leq \sum_{a \in A} \gamma_a^2$.  Next, we refer to Lemma IV.4.  Since $\ket{\phi}$ is an eigenstate on $\mathcal{G}_W$, then from the Lemma and Equation \ref{eq:ineqchain} 
\begin{equation}
    \epsilon \sum_{\alpha \in U} \phi_\alpha^2 e^{2S_\alpha} \leq \frac{\max_{\alpha \in \overline{U}} \Delta^{(W)}_\alpha}{\min_{\alpha \in U} -\Delta^{(W)}_\alpha} \leq O\left(\frac{N}{\sigma}\right),
\end{equation}
where $\Delta_\alpha = 1/u_\alpha - E'$ and $\Delta^{(W)}_\alpha = 1/u^{(W)}_\alpha - E_W'$.  Furthermore, $U$ and $\overline{U}$ denote classically disallowed and classically allowed regions respectively.  Since the action of all nodes $\alpha \in \partial W$ is at least $\overline{S}/2-\delta'$, then we have (assuming $\delta'$ is constant and can be ignored)
\begin{equation}
    \sum_{\alpha \in \partial W} \phi_\alpha^2 \leq \frac{e^{-\overline{S}}}{\epsilon}\frac{\max_{\alpha \in \overline{U}} \Delta^{(W)}_\alpha}{\min_{\alpha \in U} -\Delta^{(W)}_\alpha} \leq O\left(\frac{N}{\sigma}\right) e^{-\overline{S}}.
\end{equation}
Next, we note the well-known identity (which is a special case of Holder's inequality)
\begin{equation}
    \sigma_{\text{max}}(P) \leq \sqrt{\norm{P}_1 \norm{P}_\infty} \leq kN,
\end{equation}
where the second inequality follows from the fact that the 1-norm and the $\infty$-norm are upper bounded by the degree of any node in $\mathcal{G}_F$, and $k$ is some constant.  Recall that the 1-norm of matrix $P$ is the maximum column sum of $\abs{P}$, while the $\infty$-norm is the maximum row sum of $\abs{P}$.  Utilizing all of the bounds derived, Equation \ref{eq:cauchy} becomes
\begin{align}
    \abs{\bra{\zeta}\ket{r}}^2 &\leq \frac{k^2 t^2 N^2}{\epsilon} \frac{\max_{\alpha \in \overline{U}} \Delta^{(W)}_\alpha}{\min_{\alpha \in U} -\Delta^{(W)}_\alpha} e^{-\overline{S}} \left(\sum_{a \in A} \gamma_a^2\right),\nonumber\\ &\leq O\left(\frac{N^3}{\sigma}\right) e^{-\overline{S}} .
\end{align}
Next, return to Equation \ref{eq:firstbound}, and choose $\gamma_a = \nu_a \text{sgn}(\lambda_a - \mu)$ for all $a \in A$.  Then, combining Equation \ref{eq:firstbound} with the above, we find that
\begin{equation}
   \left(\sum_{a\in A} \nu^2_a \abs{\lambda_a - \mu}\right)^2 \leq O\left(\frac{N^{3}}{\sigma}\right) e^{-\overline{S}} \left(\sum_{a \in A} \nu_a^2\right),
\end{equation}
where we have suppressed the unnecessary constants for convenience.  Because $|\lambda_a - \mu| \geq \delta$, we obtain the inequality
\begin{equation}
    \sum_{a \in A} \nu_a^2 \leq \frac{O(N^3)}{\delta^2 \sigma} e^{-\overline{S}}.
\end{equation}
However, the LHS of this expression is just equal to the norm squared of the orthogonal complement of $\ket{\eta \phi}$ onto the span of the eigenvectors in $A$.  In terms of notation described in the theorem statement, this may be written as
\begin{equation}\label{eq:etaphi}
    \norm{(I-P_\psi(\mu-\delta, \mu+\delta))\ket{\eta \phi}}_2 \leq O\left(\sqrt{\frac{N}{\sigma}}\right) \frac{N}{\delta} e^{-\overline{S}/2}.
\end{equation}
Next, we note that $\bra{(1-\eta)\phi}\ket{(1-\eta)\phi}$ satisfies the following bound:
\begin{equation}
    \bra{(1-\eta)\phi}\ket{(1-\eta)\phi} \leq \frac{\max_{\alpha \in \overline{U}} \Delta_\alpha}{\min_{\alpha \in U} -\Delta_\alpha} e^{-\overline{S}} \leq O\left(\frac{N}{\sigma}\right) e^{-\overline{S}},
\end{equation}
which follows both from the fact that $\ket{(1-\eta)\phi}$ has support on nodes with action greater than $\overline{S}/2-\delta'$ as well as Equation \ref{eq:ineqchain} and Lemma IV.4.  Because $I-P_\psi(\mu-\delta, \mu+\delta)$ has matrix norm equal to one, we have
\begin{equation}
    \norm{(I-P_\psi(\mu-\delta, \mu+\delta))\ket{(1-\eta) \phi}}_2 \leq O\left(\sqrt{\frac{N}{\sigma}}\right) e^{-\overline{S}/2}.
\end{equation}
Adding this to Equation \ref{eq:etaphi}, and using the triangle inequality, we find
\begin{equation}
    \norm{(I-P_\psi(\mu-\delta, \mu+\delta))\ket{\phi}}_2 \leq O\left(\sqrt{\frac{N}{\sigma}}\right) \frac{N}{\delta} e^{-\overline{S}/2},
\end{equation}
where again we have absorbed all constants or lower-order contributions into the $O\left(\sqrt{\frac{N}{\sigma}}\right)$ term.
\end{proof}
In short, this theorem states that solving the eigenvalue problem in the well $W(\ell)$ gives an eigenstate that is very close to the span of eigenstates in the entire domain with energies sufficiently close.  To show that the well eigenstates are themselves similar to the full eigenstates requires us to prove an additional statement.  Before presenting the theorem, we note the change in notation and denote $\ket{\phi^{(\ell, \alpha)}}$ to be the $\alpha$-th eigenstate of the eigenvalue problem in well $W(\ell)$.
\begin{theorem}
{\bf (Locality theorem II): }Suppose $\ket{\psi}$ is an eigenstate with eigenvalue $\mu$ of the eigenvalue problem on the entire graph $\mathcal{G}_F$.  Denote by $P_\phi(\mu-\delta, \mu+\delta)$ the projection operator onto the basis of the union of all eigenstates in each well $W(\ell)$ with eigenvalues between $\mu-\delta$ and $\mu+\delta$; that is
\begin{equation}
    P_\phi(\mu-\delta, \mu+\delta) = \sum_{(\ell, a) \in \overline{A'}} \ket{\phi^{(\ell,a)}}\bra{\phi^{(\ell,a)}},
\end{equation}
where $A'$ is the set of pairs $(\ell,a)$ of eigenstates and corresponding wells with eigenvalues $\lambda_{\ell,a}$ satisfying $\abs{\lambda_{\ell,a} - \mu} \geq \delta$.  Then,
\begin{equation}
\norm{(I-P_\phi(\mu-\delta, \mu+\delta))\ket{\psi}}_2 \leq O\left(\sqrt{\frac{N}{\sigma}}\right) \frac{N}{\delta} e^{-\overline{S}/2}.
\end{equation}
\end{theorem}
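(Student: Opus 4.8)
The plan is to run the argument of Locality Theorem I essentially verbatim, but with the roles of the global eigenstate $\ket{\psi}$ and the well eigenstates $\ket{\phi^{(\ell,a)}}$ interchanged. I would fix a single well $W(\ell)$ and treat $\ket{\psi}$ as the state to be localized (the role previously played by $\ket{\phi}$), while the well eigenstates $\ket{\phi^{(\ell,a)}}$ now serve as the reference basis that $\ket{\psi^{(a)}}$ served before. Correspondingly, the global landscape $u$ is replaced by the well landscape $u^{(W)}$, the operator $M$ by its restriction $M_W$ to $\mathcal{G}_W$, and the Agmon decay is supplied directly by the global Agmon estimate of Section IV.B applied to $\ket{\psi}$ itself.

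First I would introduce, for the fixed well, the cutoff $\eta^{(\ell)}_\alpha$ equal to $1$ when $\inf\{S(\alpha,\beta):\beta\in U(\ell)\}\le\overline{S}/2-\delta'$ and $0$ otherwise, with $\eta^{(\ell)}=0$ enforced on $\partial W$, exactly as in Theorem I. The key structural identity is the analog of Equation~\ref{eq:eigidentity}: for a node in the bulk of $W(\ell)$ all neighbors lie in $W(\ell)$, so the raw eigenvalue equation obeyed by $\ket{\psi}$ coincides there with the raw well equation, and $u^{(W)}$ satisfies its own defining equation throughout $W(\ell)$. Hence $\ket{\psi}$ satisfies the well-massaged Schr\"odinger equation $M_W\ket{\psi}=\mu\ket{u^{(W)}\psi}$ in the bulk, so the residual in $M_W\ket{\eta^{(\ell)}\psi}=\mu\ket{u^{(W)}\eta^{(\ell)}\psi}+\ket{u^{(W)}r}$ is supported only on the boundary edges $\partial\mathcal{G}_W$. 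Pairing with the test state $\ket{\zeta^{(\ell)}}=\sum_{a:(\ell,a)\in A'}\gamma_{\ell,a}\ket{\phi^{(\ell,a)}}$ and expanding in the well eigenbasis then gives $\braket{\zeta^{(\ell)}}{r}=\sum_{a:(\ell,a)\in A'}\gamma_{\ell,a}\nu_{\ell,a}(\lambda_{\ell,a}-\mu)$ with $\nu_{\ell,a}=\braket{\phi^{(\ell,a)}}{\eta^{(\ell)}\psi}$, mirroring Equation~\ref{eq:firstbound}.

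The bounding half is then identical: Cauchy--Schwarz gives $\abs{\braket{\zeta^{(\ell)}}{r}}^2\le t^2\sigma_{\text{max}}^2(P)\big(\sum_{\alpha\in\partial W(\ell)}\psi_\alpha^2\big)\big(\sum_a\gamma_{\ell,a}^2\big)$ with $\sigma_{\text{max}}(P)\le kN$ as before, and choosing $\gamma_{\ell,a}=\nu_{\ell,a}\,\mathrm{sgn}(\lambda_{\ell,a}-\mu)$ together with $\abs{\lambda_{\ell,a}-\mu}\ge\delta$ yields $\sum_{a:(\ell,a)\in A'}\nu_{\ell,a}^2\le \frac{k^2t^2N^2}{\delta^2}\sum_{\alpha\in\partial W(\ell)}\psi_\alpha^2$. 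The genuinely new ingredient, and the step I expect to be the main obstacle, is to sum this over \emph{all} wells without incurring a factor equal to the number of wells. Because every boundary node carries action at least $\overline{S}/2$ and lies on only $O(1)$ well boundaries, the \emph{single} global Agmon estimate (the chain in Equation~\ref{eq:ineqchain} applied to $\ket{\psi}$) bounds $\sum_\ell\sum_{\alpha\in\partial W(\ell)}\psi_\alpha^2\le O(N/\sigma)\,e^{-\overline{S}}$ once, giving $\sum_{(\ell,a)\in A'}\nu_{\ell,a}^2\le O\!\big(N^3/(\sigma\delta^2)\big)\,e^{-\overline{S}}$.

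Finally I would convert this into the stated operator-norm bound. The decisive structural fact is that the well eigenstates form a complete orthonormal basis: the $W(\ell)$ tile $\mathcal{G}_F$, and on each the (unmassaged, Hermitian) well Hamiltonian has an orthonormal eigenbasis supported on disjoint node sets, so $I-P_\phi=\sum_{(\ell,a)\in A'}\ket{\phi^{(\ell,a)}}\bra{\phi^{(\ell,a)}}$ and $\norm{(I-P_\phi)\ket{\psi}}_2^2=\sum_{(\ell,a)\in A'}\abs{\braket{\phi^{(\ell,a)}}{\psi}}^2$. Writing $\braket{\phi^{(\ell,a)}}{\psi}=\nu_{\ell,a}+\braket{\phi^{(\ell,a)}}{(1-\eta^{(\ell)})\psi}$, the first piece is controlled by the displayed sum, while the remainder is governed by $\sum_\alpha\abs{(1-\eta^{(\ell)})\psi_\alpha}^2\le O(N/\sigma)\,e^{-\overline{S}}$, again from the global Agmon decay since $1-\eta^{(\ell)}$ is supported on nodes of action $\ge\overline{S}/2-\delta'$. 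A triangle-inequality combination exactly as at the end of Theorem I then produces $\norm{(I-P_\phi)\ket{\psi}}_2\le O\!\big(\sqrt{N/\sigma}\big)\,\frac{N}{\delta}\,e^{-\overline{S}/2}$.
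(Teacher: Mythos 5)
Your proposal is correct in its core mechanism but organizes the argument genuinely differently from the paper, so it is worth spelling out the difference. The paper's proof is global and one-shot: it keeps the global operator $M$ and the global landscape $u$, uses a single test state $\ket{\zeta}=\sum_{(\ell,a)\in A'}\gamma_{\ell,a}\ket{\phi^{(\ell,a)}}$ built from eigenstates of \emph{all} wells, and a single cutoff $\eta$ defined through the global action ($\eta_\alpha=1$ iff $S_\alpha\le\overline{S}/2-\delta'$). There the global eigenstate $\ket{\psi}$ satisfies the massaged equation exactly everywhere, and it is the well eigenstates inside $\ket{\zeta}$ that satisfy it only in the well bulks, patched by $\eta=0$ near the boundaries. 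You dualize this: per well, with the restricted operator $M_W$ and the well landscape $u^{(W)}$, the well eigenstates are exact and it is $\ket{\psi}$ that satisfies the well-massaged equation only in the bulk. Your structural identity justifying this is correct (in the bulk $N_W(\alpha)=N(\alpha)$ and $u^{(W)}$ obeys its defining equation there, so the massaging goes through), both routes then use the same two inputs (the boundary-jump formula plus Cauchy--Schwarz, and the global Agmon decay of $\ket{\psi}$, which is why neither needs Lemma IV.4, exactly as the paper remarks), and your closing observation that the $\ket{\phi^{(\ell,a)}}$ form a complete orthonormal basis -- disjoint supports across wells, Hermitian restriction within each well -- is a clean, explicit justification of two steps the paper uses silently: that $I-P_\phi$ is an orthogonal projection of norm one, and that $\sum_{(\ell,a)\in A'}\nu_{\ell,a}^2$ is the squared norm of $(I-P_\phi)\ket{\eta\psi}$.

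The one genuine gap is the step you flagged yourself: the assertion that each boundary node lies on only $O(1)$ well boundaries. A node $\alpha$ with $\eta^{(\ell)}_\alpha=1$ does belong to a unique well, but a node with $\eta^{(\ell)}_\alpha=0$ adjacent to the cutoff region of well $\ell$ may simultaneously be adjacent to the cutoff regions of many other wells; the only a priori bound on this multiplicity is the degree of $\alpha$, which is $\Theta(N)$ in the worst case. (The triangle inequality forbids this only when the incident edge Lagrangians are small compared to $\delta'$, and deep in the forbidden region they need not be -- each term is $\log\bigl(1+\sqrt{\cdots}\bigr)$ with $1/u-E'$ possibly of order $N$.) With multiplicity $m$, your summation becomes $\sum_\ell\sum_{\alpha\in\partial W(\ell)}\psi_\alpha^2\le m\, O(N/\sigma)\,e^{-\overline{S}}$, which costs a factor $\sqrt{m}$ in the final bound; in the worst case you prove the theorem with prefactor $N^{3/2}/\delta$ rather than the stated $N/\delta$. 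The exponential content $e^{-\overline{S}/2}$ -- the physically meaningful part -- is untouched, but the stated polynomial prefactor is not achieved. This is precisely what the paper's global organization buys: with one global $\eta$ and one test state spanning all wells, every jump node enters a single application of Cauchy--Schwarz exactly once, so no multiplicity factor can appear. To keep your per-well structure you would need an additional hypothesis bounding the number of wells within Agmon distance of order $\overline{S}/2$ of any single node; otherwise, simply adopt the paper's global cutoff and global test state in the bounding half of your argument and retain your (nicer) completeness-based conversion at the end.
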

\begin{proof}
The proof is very similar to that of the previous Theorem, but there are several key changes.  Most of the changes involve swapping $\phi$ and $\psi$, and thus we will be very brief with the proof.  First, define $\ket{\psi}$ and note that it satisfies the eigenvalue equation.  Consider the restriction of $\ket{\psi}$, given by $\ket{\eta \psi}$, which satisfies $M \ket{\psi} = \mu \ket{u \eta \psi} + \ket{u r}$, again for some residual vector $\ket{r}$.

Next, define the test state $\ket{\zeta} = \sum_{(\ell,a) \in A'} \gamma_{\ell, a} \ket{\phi^{(\ell,a)}}$; the difference here is that we take a superposition of different eigenstates from different wells if their corresponding eigenvalue is close to $\mu$.  Then, Equation \ref{eq:eigbasis} becomes
\begin{equation}
    \bra{\zeta}\ket{r} = \sum_{(\ell,a)\in A'} \gamma_{\ell,a} \sum_\alpha \phi^{(\ell, a)}_\alpha \psi_\alpha \eta_\alpha\left(\lambda_{\ell,a} - \mu\right),
\end{equation}
Before, to arrive at this equation, we needed to set $\eta$ at the boundary nodes equal to zero; here, a similar manipulation holds except $\eta$ will equal zero on nodes at the boundary of each well $\ell$ appearing in $A'$.  Moreover, $u$ is the landscape for the global problem, not the well problem for $\ket{\phi^{(\ell,a)}}$, but this can be amended by using the fact that $\ket{\phi}$ solves the eigenvalue problem with the global landscape for nodes in the bulk of the well as well as $\eta = 0$ in the boundary of the wells.  This is identical to the reasoning used for setting Equation \ref{eq:eigidentity} to zero.  

With this, Equation \ref{eq:firstbound} then becomes
\begin{equation}
    \bra{\zeta}\ket{r} = \sum_{(\ell,a)\in A'} \gamma_{\ell,a} \nu_{\ell,a}\left(\lambda_{\ell,a} - \mu\right),
\end{equation}
where $\nu_{\ell,a} = \bra{\phi^{(\ell,a)}}\ket{\eta \psi}$.  After some basic manipulation as in the previous theorem, we have the following expression:
\begin{equation}
    \bra{\zeta}\ket{r} = t\sum_{\langle \alpha, \beta \rangle} (\eta_\beta - \eta_\alpha)(\psi_\alpha \zeta_\beta - \psi_\beta \zeta_\alpha).
\end{equation}
Next, since this sum is over the entire Fock space graph, we select $\eta_\alpha = 1$ if $S_\alpha \leq \overline{S}/2-\delta'$ and $\eta_\alpha = 0$ otherwise.  Roughly speaking, we draw thick boundaries around each of the wells, and it is within these regions where $\eta = 0$.  As before, for sufficiently small $\delta'$, the region in which $\eta = 0$ will include all nodes that we require $\eta = 0$ at.  As a result, the sum above is only nonzero for special edges $\langle \alpha, \beta \rangle \in \bigcup_\ell \partial \mathcal{G}_W(\ell)$ where $\eta_\alpha = 0$ and $\eta_\beta = 1$ or vice versa; furthermore, the nodes for which $\eta$ is nonzero have action at least equal to $\overline{S}/2-\delta'$.  The analysis then proceeds identically to the previous theorem (with $\phi$ replaced with $\psi$), and use of Cauchy-Schwartz along with the other identities gives
\begin{align}
    \abs{\bra{\zeta}\ket{r}}^2 &\leq \frac{k^2 t^2 N^2}{\epsilon} \frac{\max_{\alpha \in \overline{U}} \Delta_\alpha}{\min_{\alpha \in U} -\Delta_\alpha} e^{-\overline{S}} \left(\sum_{(\ell,a) \in A'} \gamma_{\ell,a}^2\right),\nonumber\\ &\leq O(N^3) e^{-\overline{S}}.
\end{align}
In this case, the analysis is simpler because we do not need to make use of Lemma IV.4.  Next, we choose $\gamma_{\ell,a} = \nu_{\ell,a} \text{sgn}\left(\lambda_{\ell,a}-\mu\right)$, which leads to the equation
\begin{equation}
    \sum_{(\ell,a) \in A'} \nu_{\ell,a}^2 \leq \frac{O(N^3)}{\delta \sigma} e^{-\overline{S}/2},
\end{equation}
leading to 
\begin{equation}
    \norm{(I-P_\phi(\mu-\delta, \mu+\delta))\ket{\eta \psi}}_2 \leq O\left(\sqrt{\frac{N}{\sigma}}\right) \frac{N}{\delta} e^{-\overline{S}/2}.
\end{equation}
Finally, we have a similar bound for $\ket{(1-\eta)\psi}$ by utilizing the same logic as in the last theorem.  Thus, we are done.
\end{proof}
This statement shows that a global eigenstate $\ket{\psi}$ is close to the span of well eigenstates with similar eigenvalues.  Note that this (coupled with the last theorem) presents a bijective correspondence between well and global eigenstates.  In particular, if the spectrum associated with a well is not resonant with that of any another well, then, any global eigenvalue that is close enough to an eigenvalue of the first well will imply that the global eigenstate will localize in that well.  We call these ``locality theorems'' as they argue that eigenstates do not uniformly spread over all possible classically allowed regions, but rather tend to only localize at a small handful.  Furthermore, both these theorems have useful bounds if the minimum Agmon distance between wells is of order $\log(N)$, otherwise the RHS of both theorems will scale directly with system size.

One can convert both of these theorems into statements comparing the global spectrum with the well spectrum.  Specifically, call $C(\lambda) = \#\{\lambda_\alpha : \lambda_\alpha < \lambda\}$ and $C_0(\lambda) = \#\{\lambda_{\ell, \alpha} : \lambda_{\ell, \alpha} < \lambda\}$, which counts the number of eigenvalues in the global and local problems, respectively, below a particular value $\lambda$.  From an identical analysis as in Ref. \cite{potential-2}, we then arrive at the following Corollary:
\begin{corollary}
Suppose we choose $\overline{C}$ and $\delta$ such that
\begin{equation}
    O\left(\sqrt{\frac{N}{\sigma}}\right) \frac{N \overline{C}}{\delta} < e^{\overline{S}/2}.
\end{equation}
Then, it follows that
\begin{align}
    \min(\overline{C}, C_0(\mu - \delta)) &\leq C(\mu), \nonumber\\
    \min(\overline{C}, C(\mu - \delta)) &\leq C_0(\mu).
\end{align}
\end{corollary}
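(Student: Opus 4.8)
The plan is to reduce both inequalities to a single linear-algebraic counting lemma, invoking Locality Theorem I for the first inequality and Locality Theorem II for the second. Write $\epsilon' = O(\sqrt{N/\sigma})(N/\delta)e^{-\overline{S}/2}$ for the common bound on the right-hand sides of the two locality theorems, so that the hypothesis of the corollary reads precisely $\overline{C}\epsilon' < 1$. Two structural facts set up the argument. First, the well eigenstates $\ket{\phi^{(\ell,a)}}$ are mutually orthonormal: within a fixed well they diagonalize a Hermitian problem, while eigenstates from distinct wells $W(\ell)\neq W(\ell')$ have disjoint support because the neighborhoods $W(\ell)$ were constructed to be pairwise disjoint; hence $C_0(\mu)$ equals the dimension of the span of all well eigenstates with eigenvalue below $\mu$. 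Second, the global eigenstates $\ket{\psi^{(a)}}$ are orthonormal and $C(\mu)$ is the dimension of their span below $\mu$.

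The counting lemma I would establish is: if $\{\phi_i\}_{i=1}^{m}$ are orthonormal and each satisfies $\norm{(I-P)\phi_i}\leq \epsilon'$ for the orthogonal projector $P$ onto a subspace of dimension $D$, and if $m\epsilon'^2 < 1$, then $m\leq D$. The proof is by contradiction: if $m > D$, the projected vectors $P\phi_i$ are linearly dependent, so $\sum_i c_i P\phi_i = 0$ for some coefficient vector with $\sum_i |c_i|^2 = 1$; then $\sum_i c_i \phi_i = \sum_i c_i (I-P)\phi_i$, whose norm is at most $\epsilon'\sum_i |c_i| \leq \epsilon'\sqrt{m}$ by Cauchy--Schwarz, while orthonormality forces $\norm{\sum_i c_i\phi_i}=1$; together these give $m\epsilon'^2 \geq 1$, a contradiction.

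For the first inequality I would pick $m = \min(\overline{C},C_0(\mu-\delta))$ well eigenstates with eigenvalues $\lambda_i < \mu-\delta$. Locality Theorem I bounds $\norm{(I-P_\psi(\lambda_i-\delta,\lambda_i+\delta))\ket{\phi^{(\ell_i,a_i)}}}\leq \epsilon'$, and since $\lambda_i<\mu-\delta$ the window $(\lambda_i-\delta,\lambda_i+\delta)$ lies strictly below $\mu$, so the range of $P_\psi(\lambda_i-\delta,\lambda_i+\delta)$ is contained in that of the projector $P_{<\mu}$ onto global eigenstates of energy below $\mu$; therefore $\norm{(I-P_{<\mu})\ket{\phi^{(\ell_i,a_i)}}}\leq \epsilon'$. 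Because $\overline{C}\geq 1$, the hypothesis $\overline{C}\epsilon'<1$ yields $\epsilon'<1$ and hence $m\epsilon'^2 \leq \overline{C}\epsilon'^2 = (\overline{C}\epsilon')\epsilon' < 1$; the counting lemma then gives $m \leq \dim(\mathrm{range}\,P_{<\mu}) = C(\mu)$. The second inequality is obtained symmetrically: take $m' = \min(\overline{C},C(\mu-\delta))$ orthonormal global eigenstates of energy below $\mu-\delta$, use Locality Theorem II to place each within $\epsilon'$ of the span of well eigenstates in a window below $\mu$, and apply the lemma with $D = C_0(\mu)$.

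The main obstacle is bookkeeping rather than a genuine analytic difficulty. The delicate points are to confirm that the per-eigenstate locality bounds, stated for a narrow window centered at each selected eigenvalue, genuinely control the distance to the \emph{entire} below-$\mu$ subspace (the window-containment observation above), and to verify that the hypothesis, written with a single power of $\epsilon'$, is strong enough to secure the quadratic condition $m\epsilon'^2<1$ that the independence argument requires. The only place truly needing care is the treatment of eigenvalues near the threshold, where the strict versus non-strict inequalities in the definitions of $C$ and $C_0$ matter; the strict separation $\lambda_i<\mu-\delta$ built into the choice of $m$ is precisely what resolves this.
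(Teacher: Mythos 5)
Your proposal is correct, and it takes essentially the same route as the proof the paper relies on: the paper's own ``proof'' is simply a citation to Corollary 5.2 of Ref.~\cite{potential-2}, whose argument is precisely this dimension-counting scheme --- an orthonormal family (well eigenstates are orthonormal across wells by disjoint support, global eigenstates by Hermiticity) each lying within $\epsilon'$ of a subspace forces that subspace's dimension to be at least the family's size when $m\epsilon'^2<1$ --- fed by Locality Theorems I and II together with the window-containment observation $(\lambda_i-\delta,\lambda_i+\delta)\subset(-\infty,\mu)$. Your writeup in effect supplies the details that the paper outsources to the reference, so there is nothing to correct.
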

\begin{proof}
See Ref. \cite{potential-2}, Corollary 5.2.
\end{proof}

This corollary essentially states that the global and the well spectrum can be $\delta$-close to each other up to an upper counting limit of $\overline{C}(\delta)$.  This gives the intuition that the eigenvalues of the global and well problems more closely agree towards the bottom of the spectrum (since $\delta$ can be chosen to be smaller without significantly decreasing the size of $\overline{C}$) -- this indicates that global eigenstates are likely to only resonate with a single well towards the bottom of the spectrum.

\section{Locator Expansion for MBLL}

In the previous two sections, we have defined a many-body generalization of the LL, $u$, and showed that it satisfies the same identities as derived by Mayboroda et al.~\cite{pnas,ll-dual} in the single particle case.  We further showed that $1/u$ behaves as an effective potential and derived decay rates that relate to the height of an effective potential barrier and the connectivity of a particular node in the Fock space graph.  However, the effective potential is in general an analytically intractable object to calculate, because it requires one to invert the Hamiltonian.  We would thus like to calculate corrections to the effective potential perturbatively, much like Anderson has done using the famous Locator Expansion (LE)~\cite{Anderson}.  In this section, we show that naive perturbation theory does not suffer from the resonances that plagued Anderson's LE.  This allows us to state that the Agmon estimates derived in the previous section are robust to small interaction and hopping, and MBL for low energy states should continue to persist.

We start by splitting the Hamiltonian into three terms corresponding to the hopping, onsite potential, and interaction terms: $\mathcal{H} = T + U + V$.  The landscape is given by
\begin{equation}
    \ket{u} = \frac{1}{U + (T+V)} \ket{1},
\end{equation}
where the suggestive grouping differentiates the nonperturbative term and the perturbations.  We may write this recursively, by utilizing the identity $A^{-1} = B^{-1} + B^{-1}(B-A)A^{-1}$ for $A = U + (T+V)$ and $B = U$:
\begin{equation}
    \ket{u} = \frac{1}{U}\left[I + (T+V)\frac{1}{U + (T+V)}\right] \ket{1},
\end{equation}
and iteration (assuming inverses are well-defined) results in the following infinite series:
\begin{align}
    \ket{u} &= U^{-1} \sum_{n=0}^\infty [(T+V) U^{-1}]^n \ket{1},\nonumber \\ &= U^{-1} \sum_{\beta}\sum_{n=0}^\infty [(T+V) U^{-1}]^n \ket{\beta},
\end{align}
where in the second line we use the fact that $\ket{1}$ is the all ones vector.  The sum is over all basis states in the Fock space (which are eigenstates of $U$).  Next, we call $\mathcal{E}_\alpha = \sum_i n_i^{(\alpha)} \epsilon_i$, which is the unperturbed eigenvalues of $U$.  Inserting $n$ complete sets of eigenstates labelled by $\ket{\alpha_i} \bra{\alpha_i}$, we find:
\begin{align}
    \bra{\alpha}\ket{u} = & \sum_{n=0}^{\infty} \sum_{\alpha_1, \alpha_2, \ldots, \alpha_{n+1}, \beta} \bra{\alpha} U^{-1} \ket{\alpha_1} \times \nonumber\\ & \bra{\alpha_1} (T+V) U^{-1} \ket{\alpha_2} \bra{\alpha_2} (T+V) U^{-1} \ket{\alpha_3} \times \nonumber \\ & \ldots \bra{\alpha_{n}}(T+V) U^{-1}\ket{\alpha_{n+1}}\bra{\alpha_{n+1}}\ket{\beta}.
\end{align}
This can be written in terms of $\mathcal{E}$, resulting in the following expression
\begin{align}
    \bra{\alpha}\ket{u} = \frac{1}{\mathcal{E}_\alpha}\sum_{n=0}^{\infty} \sum_{\alpha_1 = \alpha, \alpha_2, \ldots, \alpha_{n+1}} \frac{\bra{\alpha_1} (T+V) \ket{\alpha_2}}{\mathcal{E}_{\alpha_2}}\times \nonumber\\ \frac{\bra{\alpha_2} (T+V) \ket{\alpha_3}}{{\mathcal{E}_{\alpha_3}}}\ldots \frac{\bra{\alpha_{n}}(T+V) \ket{\alpha_{n+1}}}{{\mathcal{E}_{\alpha_{n+1}}}}.
\end{align}
This series has a graphical representation: we start at a node $\alpha$ and define two propagators: the first carries an amplitude $-\frac{t}{\mathcal{E}_\alpha}$ and occurs along the directed edge from node $\alpha$ to any of its neighboring nodes.  The second is a self-propagator carrying an amplitude $\frac{V}{\mathcal{E}_\alpha}\left(\sum_{\langle i, j\rangle} n_i^{(\alpha)} n_j^{(\alpha)}\right)$.  Thus, the $n$th order term is the sum over all possible paths of products of $n$ propagators along a given path.  

In the conventional LE, quantities like $E - \mathcal{E}_\alpha$ appear in the denominators, which lead to an unregulated perturbation series; however, such resonances are crucially not present in this expansion.  This allows us to show that for sufficiently small perturbation, this series converges:
\begin{theorem}
\label{Th:locator}
 Suppose the onsite potentials have positive support; i.e. $\epsilon_i > \delta$ for $\delta$ a positive constant.  Then the perturbation series for $\bra{\alpha}\ket{u}$ converges for sufficiently small perturbation if $\mathcal{G}_P$ is a cubic lattice in $d$-dimensions.
\end{theorem}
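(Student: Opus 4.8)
The plan is to bound the series for $\bra{\alpha}\ket{u}$ term by term by a convergent geometric series, exploiting the fact that—unlike the conventional locator expansion—every energy denominator appearing here is $\mathcal{E}_{\alpha_k}$ rather than a difference $E-\mathcal{E}_{\alpha_k}$, so no denominator can resonate. First I would turn the positivity hypothesis into a uniform lower bound on these denominators. Since $\epsilon_i > \delta > 0$ and every configuration in the fixed sector has exactly $N$ occupied sites, every node satisfies $\mathcal{E}_\alpha = \sum_i n_i^{(\alpha)}\epsilon_i \geq N\delta > 0$. This bound, crucially scaling linearly in $N$, is exactly what the ``positive support'' assumption buys, and it is what removes the possibility of small denominators spoiling the expansion.

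Next I would control the combinatorics of $\mathcal{G}_F$ for a $d$-dimensional cubic lattice. Any node $\alpha$ has at most $2dN$ neighbors, since a single hop moves one of the $N$ particles to one of at most $2d$ adjacent sites; likewise the self-loop weight obeys $\sum_{\langle i,j\rangle} n_i^{(\alpha)} n_j^{(\alpha)} \leq dN$, because each occupied site is incident to at most $2d$ edges and each occupied--occupied edge is counted twice. Combining these with $\mathcal{E}_{\gamma}\geq N\delta$, the total weight of all branches (hops and the self-loop) leaving any intermediate node $\beta$ is bounded uniformly:
\begin{equation}
\sum_{\gamma}\left|\frac{\bra{\beta}(T+V)\ket{\gamma}}{\mathcal{E}_\gamma}\right| \leq \frac{2dN\, t}{N\delta} + \frac{dN\,|V|}{N\delta} = \frac{d\left(2t+|V|\right)}{\delta} \equiv r.
\end{equation}
The decisive feature is that the linear-in-$N$ maximal degree cancels the linear-in-$N$ lower bound on $\mathcal{E}_\alpha$, leaving a per-step factor $r$ that is independent of $N$.

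Because this single-step bound is uniform over all nodes, the multi-index path sum collapses by iterating the innermost sum outward: the $n$-th order contribution is at most $r^n$, so
\begin{equation}
\left|\bra{\alpha}\ket{u}\right| \leq \frac{1}{\mathcal{E}_\alpha}\sum_{n=0}^{\infty} r^n \leq \frac{1}{N\delta}\cdot\frac{1}{1-r},
\end{equation}
which is finite precisely when $r<1$, i.e. when $2t+|V| < \delta/d$. This inequality is the rigorous content of ``sufficiently small perturbation,'' and it yields absolute convergence of the series, thereby justifying the term-by-term manipulations and the robustness of the Agmon estimates under weak hopping and interaction.

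I expect the main obstacle to be the uniform bookkeeping of the path sum rather than any single estimate. One must verify that replacing every branch weight by the common constant $r$ genuinely reduces the sum over $\alpha_2,\ldots,\alpha_{n+1}$ to the geometric factor $r^n$, without an accumulating $N$-dependent multiplicity that would overwhelm the geometric decay. The delicate point is the exact cancellation of the $O(N)$ maximal degree against the $O(N)$ lower bound on $\mathcal{E}_\alpha$: were the denominators merely bounded below by a constant, the per-step factor would grow like $N$ and no fixed smallness of $t$ and $V$ would suffice. Treating the self-loop branches on the same footing as the hopping branches, and confirming that the degree bound $2dN$ holds at \emph{every} node rather than only at typical ones, are the steps that require the most care.
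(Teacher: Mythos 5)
Your proposal is correct and takes essentially the same approach as the paper: both exploit the resonance-free denominators $\mathcal{E}_{\alpha}\geq N\delta$ together with the $O(N)$ connectivity of $\mathcal{G}_F$ for a cubic lattice, so that the $O(N)$ degree cancels the $O(N)$ lower bound on the denominators and the path sum is dominated by a geometric series with an $N$-independent ratio. The only cosmetic difference is that you fold the $O(N)$ self-loop weight directly into an explicit per-step factor $r=d\left(2t+|V|\right)/\delta$, whereas the paper handles it by duplicating each node $N$ times and then bounding the number of paths by $(cN)^{n}$ with a constant amplitude per edge.
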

\begin{proof}
The proof is straightforward.  First, note that the eigenstates are not affected by a constant shift in the onsite potential and thus the onsite potential can be chosen to have positive support.  This implies that $\mathcal{E}_\alpha \geq \delta N$ for $\delta$ the minimum value in the support of the onsite potential.  The $n$th order term in the infinite series is given by
\begin{align}
    C^{(\alpha)}_n &= \frac{1}{\mathcal{E}_\alpha} \sum_{\alpha_2, \ldots, \alpha_{n+1}} \frac{\bra{\alpha} (T+V) \ket{\alpha_2}}{\mathcal{E}_{\alpha_2}} \times \nonumber\\ &\frac{\bra{\alpha_2} (T+V) \ket{\alpha_3}}{{\mathcal{E}_{\alpha_3}}}\ldots \frac{\bra{\alpha_{n}}(T+V) \ket{\alpha_{n+1}}}{{\mathcal{E}_{\alpha_{n+1}}}},\nonumber \\
    &\leq \frac{1}{\mathcal{E}_\alpha} \frac{1}{(\delta N)^n} \sum_{\alpha_2, \ldots, \alpha_{n+1}} \bra{\alpha} (T+V)  \ket{\alpha_2} \times \nonumber\\ &\bra{\alpha_2} (T+V)  \ket{\alpha_3}\ldots \bra{\alpha_{n}}(T+V) \ket{\alpha_{n+1}}.
\end{align}
The sum is interpreted as a sum over paths of length $n$ in $\mathcal{G}$ starting at $\alpha$.  Furthermore, we allow self-paths to occur, which result from $\bra{\alpha}V\ket{\alpha}$ being non-zero.  Additionally, since this term can be of order $N$ (since it is equal to $V\sum_{\langle i,j \rangle} n_i^{(\alpha)}n_j^{(\alpha)}$), we create $N$ copies of each node, such that the matrix element due to transitioning to any neighboring node is constant.  For the case of a lattice, the number of paths in this enlarged graph of length $n$ can therefore be bounded by $(cN)^n$, where $c$ is a constant; this follows from the fact that the maximum degree of a node in $\mathcal{G}$ for such graphs is $O(N)$.  The transition amplitude along an edge is upper bounded in absolute value by a constant, which we call $\lambda$.  Therefore, $C^{(\alpha)}_n \leq \frac{1}{\mathcal{E}_\alpha} \frac{\lambda^n (cN)^n}{(\delta N)^n} = \frac{1}{\mathcal{E}_\alpha} \left(\frac{c \lambda}{\delta}\right)^n$.  Choosing $\lambda$ small enough so that $c \lambda/\delta < 1$, $C^{(\alpha)}_n$ decays geometrically in $n$ and the series therefore converges.
\end{proof}

\subsection{Persistence of MBL with Interactions}

While the MBLL illuminates the structure of eigen-states and their exponential decay away from the classically-allowed regions in the Fock space for a given many-body energy, it  does not provide direct information about what the many-body energies are. Of course, at the bare level before performing the locator expansion (i.e., with only a disorder potential present $\left\{ \epsilon_j  \right\}$), the many-body spectrum is known: $\mathcal{E}_\alpha = \sum\limits_{j=1}^L \epsilon_j n_j^{(\alpha)}$ (where $n_j^{(\alpha)}$ is the occupation number of site $j$ in the many-body state $\ket{\alpha}$). The corresponding landscape is just its inverse $\bra{\alpha}\ket{u} \propto \frac{1}{\mathcal{E}_\alpha}$ and in particular, it scales as $1/N$.  The convergence of the landscape under the locator expansion implies that its structure does not change much under perturbation theory, the original localizing traps remain, and the underlying scaling of the effective potential $1/u$ with $N$  persists. However, determining how the many-body energies shift under even a small perturbation is not immediately obvious. Finding self-energy corrections via perturbation theory and in general determining the full spectrum are more difficult problems than perturbing the landscape, because of the resonances complicating the former.  

It is worth mentioning here Mayboroda et al's intriguing results on connecting the spectrum of the Schr{\"o}dinger operator and the structure of the single-particle landscape. As reported in Ref.~[\onlinecite{ll-conj}] and Mayboroda's talk at the 2018 International Congress of Mathematicians~\cite{MayborodaICMTalk}, there is strong evidence that the local minima of the single-particle landscape determine the spectrum to great accuracy. In particular, it was proposed that $E_j \approx \left(1 + \frac{d}{4}\right) \left(\min u^{-1} \right)_j$, which relates the $j$-th energy level to the depth of the $j$-th deepest well in the inverse landscape (here, $d$ is the dimension of space). This heuristic formula was numerically shown to work with unprecedented precision for the lower part of spectrum and some heuristic analytical arguments were presented in Ref.~[\onlinecite{ll-conj}] to support it. However, a rigorous proof supporting these estimates is missing even in the single-particle case at the time of writing of this paper, and it is unclear whether they generalize to discrete lattices or graphs.  However, if there were a one-to-one correspondence between the local minima of the MBLL and many-body energies, this would imply (in conjunction with the convergence of the locator expansion) many-body localization of the  states associated with the MBLL minima (note that there are in general ``more'' states than there are minima in the inverse landscape).

Apart from this observation, a more convincing argument is based on physical reasoning as follows.  Under a weak enough perturbation $\lambda$, the greatest shift we can expect in many-body energies is $O(\lambda N)$.  For $\lambda$ small this shift is small and the size of the classical region does not expand  much from the size in the $t = V = 0$ limit.  Since states are localized at the $t = V = 0$ limit, we conclude that MBL still persists for small $t$ and $V$, at least for the low energy portion of the spectrum.  Utilizing the particle-hole symmetry if $\mathcal{G}_P$ is bipartite also informs us that states are localized at high energies; this is analogous to the argument used in Ref. \cite{ll-dual}.

We note here that by MBL, we imply the existence of a subset of states in the Hilbert space which decay in the Fock space exponentially with the system size in the presence of weak interaction. This is a weaker definition than that used by Basko, Aleiner, and Altshuler~\cite{BAA1}, who called a state $\ket{\Psi_\alpha}$ in the Fock space localized if the creation of a particle-hole pair (or equivalently flipping two spins) results in a state $\sigma^+_i \sigma^-_j\ket{\Psi_\alpha}$ with a support on a finite  -- i.e., $O(1)$ -- number of states in the Fock space. This is contrasted with an ergodic state, whose support is on $O(|\mathscr{H}|)$ states; i.e., spanning most of the Hilbert space. 
Our landscape construction corresponds to an in-between situation where the support (corresponding roughly to a size of classically-allowed regions) may, in principle, be of size $\Omega(1)$, spanning a part of the Hilbert space that scales with system size. Yet, the Basko-Aleiner-Altshuler definition of MBL is weaker than the definition used by Oganesyan and Huse~\cite{Huse2007} and many subsequent publications, where localization of the entire Hilbert space is required. This may occur in a lattice model in one dimension, but most certainly cannot happen in three and higher dimensions, nor in continuous models. Our construction does not rule out the possibility that these stronger versions of MBL hold in a particular model, but can only point to the existence of states exponentially localized in a low-energy part of the Hilbert space (for an arbitrary-dimensional lattice or graph, $\mathcal{G}_P$), where the landscape function is most informative. 

Furthermore, the MBLL does not rule out that states may delocalize at the middle of the band, which may correspond to a percolation-like transition on $\mathcal{G}_F$ as a function of energy.  This potential tendency to delocalize increases as the graph becomes more connected.  This is because choosing an energy towards the middle of the spectrum causes a considerable fraction of nodes to be in the classically allowed region.  Once this fraction is as large as the critical fraction for percolation in $\mathcal{G}_F$, states will become extended throughout the entire Hilbert space.  As $\mathcal{G}_F$ becomes more connected, the percolation transition occurs at a lower filling fraction.  Since $\mathcal{G}_F$ becomes more connected as the dimension increases, this may point to evidence of the formation of a mobility edge for high-dimensional systems.

One caveat to note is when one would expect the perturbative series described in this section to diverge.  This occurs when the physical graph $\mathcal{G}_P$ is highly connected.  Note that if $\mathcal{G}_P$ is locally connected (for instance in a lattice), then $\mathcal{G}_F$ has a maximum degree that is linear in $N$.  The moment $\mathcal{G}_P$ becomes more strongly connected by longer range hopping, $\mathcal{G}_F$ will have nodes with degree that scale as $\Omega(N)$, and the resulting bound presented in Theorem~\ref{Th:locator} of this section will not hold.  This is consistent with the intuition that systems with long-range hopping do not exhibit MBL.

\section{Conclusions}

We construct a generalization of the single-particle localization landscape to interacting many-body systems. The many-body localization landscape is a useful representation of interacting many-body systems on a general class of lattices.  The MBLL is given by $(\mathcal{H}+K) \ket{u} = \ket{1}$ in Fock space, and thus is more appropriately defined with respect to a graph in the Fock space, which we call $\mathcal{G}_F$.  We show that the MBLL satisfies $|\psi_\alpha| \leq E (\max_\beta |\psi_\beta|) u_\alpha$, and thus the valleys of $1/u$ can be interpreted as ``traps'' (classically allowed regions) with respect to $\mathcal{G}_F$ that confine many-body eigenstates.

We then describe the behavior of eigenfunctions outside the classically allowed regions by generalizing Agmon decay estimates.  These decay estimates are defined with respect to paths in the Fock-space graph $\mathcal{G}_F$ and can be interpreted as the minimization of an effective action.  We show that the eigenfunctions begin to decay slowly once they pass through the classically forbidden region, but it is only after roughly $\Theta(N^{1/2})$ steps into the forbidden region that the decay rate begins to scale with the system size.  When the depth into the forbidden region is $\Theta(N)$, the decay rate becomes exponential in the path length.

A locality theorem is then proved, which roughly states that the eigenstate in a particular well, as defined by the MBLL, is sufficiently close to the span of eigenstates in the entire Fock space whose energies are sufficiently close; a similar statement about global eigenstates sufficiently close to the span of well eigenstates with close energies is also proved.  This implies that eigenstates tend to localize in a small number of individual wells as opposed to being spread out in the Hilbert space among many wells.

We then develop a perturbative expansion for the MBLL similar to the  locator expansion; i.e. we treat the disorder as the leading term in the Hamiltonian and perform a series expansion in both the hopping and interaction.  Unlike in the locator expansion, where one encounters resonances, we show that the perturbative expansion for the MBLL converges for small hopping and interaction.  This implies that the landscape only changes by a multiplicative constant from the unperturbed landscape.  Assuming the corresponding energy shift is not too large, the size of the classically allowed regions should not change by much with small interaction and the localization properties should be similar to those of the non-interacting problem. 

The convergent locator expansion combined with the locality theorem allows us to prove that at least for particular kinds of disorder distributions, localization does persist in the presence of weak interactions for a part of the Hilbert space. Namely, we can choose an initial disorder distribution with a desired level statistics, such that the wells do not percolate up to an energy threshold.  The corresponding landscape without hopping and interactions is just the disorder potential itself and the convergent locator series implies that it is weakly corrected as long as the disorder potential is dominant compared to the hopping and interactions. The locality theorem implies that the set of tightly localized states in the disorder-only model remain exponentially localized in a small number of non-percolating wells in the presence of small interactions. This rigorously establishes a weak version of many-body localization for a class of models.

We note that the landscape construction crucially does not shed light on whether full MBL -- by which we mean localization of all states in the Hilbert space and the complete breakdown of ergodicity -- may exist in a model. What it does show, however, is that there exists a rich variety of possible localization structures in the Fock space, which include the possibility  of Basko-Aleiner-Altshuler-type MBL, full MBL of the entire spectrum, and a continuum of structures in between. In the language of the MBLL, there is a question of quantifying the size of the ``classically allowed regions,''  where the many-body wave-functions do not decay; this is analogous to the size of support in Fock space of a state with a particle-hole excitation as described in Ref.~\cite{BAA1}). This region can consist of a finite number of states~\cite{BAA1}, or scale with the system size while still covering a measure zero set in the entire Hilbert space.  Furthermore, the decay properties beyond these regions may differ -- e.g. power-law vs exponential -- based on the region in the Fock space graph.  As we have described throughout this manuscript, different parts of the Hilbert space may in principle have different properties.  This points to a variety of possible phases and suggests that a finer classification of MBL needs to be developed to describe these structures. 

We also point out some limitations of the MBLL construction we presented in this paper.  For one, we cannot fully determine the energy shifts due to interaction, which is crucial to determining the size of classically allowed regions.  This computation requires one to calculate self-energies, a method which is only known to be rigorous on a Bethe lattice.  However, Mayboroda and collaborators have several conjectures (numerically verified) stating the rather astounding notion that the set of local minima of the inverse landscape can determine the spectrum to high accuracy \cite{ll-conj}.  In particular, if such a result can be proved, we would not only be able to approximate energy shifts, but also determine level statistics of the spectrum, which are commonly used in identifying the MBL transition.

Secondly, it is not known whether the Agmon estimates derived are tight -- it may be possible to improve these bounds by eliminating or refining the presence of $\sqrt{1/|N(\alpha)|}$ in the ``graph momentum.''  Doing so would allow one to make stronger statements of rapid exponential decay for a larger fraction of states in the Fock space, which would improve understanding of the mechanism for MBL.


In the perturbative series we performed, where we treat the disorder as large, the landscape essentially only contains the onsite disorder potential. The landscape thus gives us more information when the hopping becomes considerably large.  Moreover, there are rigorous mathematical proofs for the existence of Anderson localization in 1D non-interacting systems~\cite{math-anderson-1,math-anderson-2}.  Thus, a more natural way to use perturbation theory would be to consider the non-interacting Anderson model as the starting point and calculate the MBLL perturbatively in the interaction only.  

On top of these avenues for exploration, there are several other modifications that may also lead to interesting effects.  For one, we argued that the sign of $t$ should not matter due to particle-hole symmetry if the lattice is bipartite, but it would be important to understand whether this constraint can be relaxed to frustrated lattices that are not bipartite. It is conceivable that frustrated disordered lattices may favor delocalization -- an interesting scenario. Secondly, we assumed in our treatment that the interaction term is repulsive, but a similar theory can be developed for attractive interactions. Such ``negative-$U$'' Hubbard models with disorder are of great interest for understanding the superconductor-to-insulator transition in disordered films, where a crossover is expected from a superconductor to a regime of localized Cooper pairs to that of the more conventional electronic insulator~\cite{AndersonBosons}. Thirdly, we briefly mentioned that other non-local interactions can be used, so long as the interaction strength is negative (or there exists a similar particle-hole symmetry so that the sign of the interaction does not matter) -- it would similarly be important to understand for what class of interactions the landscape picture would break down.  In particular, it may be possible to develop the MBLL theory in the case of the Dirac Sachdev-Ye-Kitaev (SYK) model~\cite{SY}, in which case the MBLL may contain signatures of ergodicity.  

The current theory developed shows that the many-body localization landscape is a relatively simple yet powerful object that reveals the rich nature of many-body localization and possibly other interacting many-body problems. Furthermore, the computational price for calculating the many-body localization landscape is smaller than that required for exact diagonalization; therefore the exact, non-perturbative landscape can be computed numerically for larger system sizes than what has been possible with exact diagonalization.

\acknowledgements{The authors are grateful to Yi-Ting Hsu for useful discussions and a fruitful collaboration on a related numerical project to be published elsewhere. S.B. was supported by NSF DMR-1613029 while at the Physics Frontier Center in JQI. This research was also supported by US-ARO (contract No. W911NF1310172) and DARPA DRINQS program (Y.L.), DOE- BES (DESC0001911) (V.G.) and the Simons Foundation. V.G. is especially grateful to the Simons Foundation for hospitality during the MPS Annual Meetings, where this research project was conceived.}

\bibliography{MBLL}

\end{document}